\numberwithin{equation}{section}
\newtheorem{proposition}{Proposition}[section]
\newtheorem{theorem}[proposition]{Theorem}
\newtheorem{lemma}[proposition]{Lemma}
\newcommand{\bx}{\mathbf{x}}
\newcommand{\bc}{\mathbf{c}}
\newcommand\R{\mathbb{R}}
\newcommand{\udots}{\mbox{\reflectbox{$\ddots$}}}
\newcommand{\ds}{\displaystyle}
\newcommand{\rf}[1]{(\ref{#1})}
\newcommand{\bea}{\begin{eqnarray}}
\newcommand{\eea}{\end{eqnarray}}
\def\void{}
\def\labelmark{}
\newenvironment{formula}[1]{\def\labelname{#1}
\ifx\void\labelname\def\junk{\begin{displaymath}}
\else\def\junk{\begin{equation}\label{\labelname}}\fi\junk}%
{\ifx\void\labelname\def\junk{\end{displaymath}}
\else\def\junk{\end{equation}}\fi\junk\labelmark\def\labelname{}}
\def\junk{\end{array}\end{displaymath}}
\def\junk{\end{array}\right.\end{equation}}
\def\labelname{}\def\junk{}
\newcommand{\beq}{\begin{formula}}
\newcommand{\eeq}{\end{formula}}
\newcommand{\beqv}{\begin{formula}{}}
\begin{document}
\title{Random tree growth by vertex splitting}
\author{F. David,  W.M.B. Dukes, T. Jonsson, S.\"O. Stef\'ansson.}

\address[Dukes\;,\;Jonsson\;,\;Stef\'ansson]{Science Institute, University of Iceland, 
	Dunhaga 3, 107 Reykjav\'ik, Iceland.}
\email{dukes/thjons@raunvis.hi.is\;,\;siguste@hi.is}

\address[David]{Institut de Physique ThÃ©orique,
CNRS, URA 2306, F-91191 Gif-sur-Yvette, France
CEA, IPhT, F-91191 Gif-sur-Yvette, France}
\email{francois.david@cea.fr}
\maketitle

\begin{center}
\today
\end{center}

\begin{abstract}
We study a model of growing planar tree graphs where in each time step we
separate the tree into two components by splitting a vertex and then
connect the two pieces by inserting a new link between the daughter
vertices. This model generalises the preferential attachment model and Ford's $\alpha$--model for phylogenetic trees. We develop a mean
field theory for the vertex degree distribution, prove that the mean
field theory is exact in some special cases and check that it agrees 
with numerical simulations in general.  We calculate various
correlation functions and show that the intrinsic Hausdorff dimension can vary
from one to infinity, depending on the parameters of the model.
\end{abstract}

\tableofcontents

\section{Introduction}
\subsection{Motivation.}
\label{ssMotivation}
Random trees arise in many branches of science ranging from
mathematics through computer science, physics and chemistry to 
biology and sociology.
Trees are the simplest generalisations of random walks and are used to
model various relationships like family trees \cite{gw} or phylogenetic 
trees \cite{kimmelaxelrod,semplesteel} and evolving populations.
They also model physical objects that look like trees, e.g. branched
polymers, and are used to encode information regarding the
secondary structure of macromolecules, RNA folding being one prominent 
example \cite{francoisetal}. 
Information about more complicated geometrical objects
like planar maps and random surfaces which arise in quantum gravity \cite{adj} can sometimes
be encoded in labelled random trees \cite{schaeffer}. 
Trees appear also in relation with fragmentation and coagulation processes \cite{Bertoin}.

There are two principal classes of models that have been used to model
trees.  The first one is equilibrium statistical mechanics, where trees
are assigned a certain weight factor (Boltzmann factor) which induces
a probability measure on the class of trees under consideration.The weight factor depends on the local properties of the trees.
The second class is growing trees where one starts with a simple tree
and then continues to add vertices and links according to some (in general stochastic) growth
rules.  After a fixed time $t$ the growth rules induce a probability
measure on the trees that can arise after time $t$.  Time can be taken
to be discrete or continuous.
The trees themselves are in general  discrete and are therefore tree graphs in the usual sense, but there exist also models of continuous random trees \cite{Aldouscont}.

The first class of models is the one that appears most
frequently in physics whereas the second class is the natural one to
use to analyse, for example growing family trees,  citation networks, and stochastic processes.  
It is well-known that in certain cases these two approaches to studying random
trees are equivalent.  
For example, the so called {\it generic random  trees} that appear in the study of gravity are equivalent to a critical
Galton-Watson process \cite{djw}.  So called {\it causal trees}  are believed to be
equivalent to certain classes of growing trees \cite{bialasII}.

In this paper we introduce and study a general model of growing trees, where growth occurs by random vertex splitting, with very general rules. 
This model includes as special limit cases several models of growing trees already studied in the physics and the mathematics literature: the well known preferential attachement model (which models for example the growth
of branched polymers out of a soup of monomers or the growth of the internet \cite{AlbertBara,MASS_DIST}); the  $\alpha$-model  \cite{alphamodel} which is a  stochastic model of cladograms (binary leaf-labelled trees studied in relation with phylogenetic trees and biological systematics) and its extensions \cite{ChenFordWinkel}. It is also related to a tree growth model that arises in the theory of RNA secondary structures \cite{francoisetal}. 
These discrete time tree growth models are in fact closely related to the continuous time tree growth models which describe e.g. self similar fragmentation processes and coalescence processes (see \cite{Bertoin} for general reference, and \cite{PitmanWinkel,HassMiermontPitmanWinkel} for recent works), and seem to be related to sequential packing processes.
Finally, back on the physics side, let us mention that growth process by vertex splitting, together with vertex merging processes, is important in Monte Carlo simulations of discrete Euclidean and Lorentzian quantum gravity models (see e.g. \cite{MCambjorn}).

Our main motivation is to develop general tools to study the properties of models of random  tree growth. 
In particular we are motivated by the issues of unification and of universality:  Is there a general tree growth process which can encompass the different models which are known at the moment? How many different universality classes, i.e.~continuous tree models with different scaling properties (exponents and correlation functions), exist in this framework?
The results presented here are a first step in this direction.

\subsection{Description of the model.}
\label{ssModelShort}
Let us first describe our model informally, a
precise definition will be given in the next section.  
Suppose that at
an initial time $t_0$ we are given a tree which will always be assumed
to be finite, rooted and planar (i.e. at each vertex the attached links
are
ordered cyclically). This assumption of planarity is convenient for the description of the model and for the proofs, but is 
not essential. In order to evolve the tree we choose a vertex
at random with a relative probability $w_i$ which only depends on 
the order $i$ of the vertex.  We split the chosen vertex into
two vertices of order $j$ and $k$ which are linked to each other and 
attach the links of the original vertex of order $i$ 
to the two new vertices so 
$j+k=i+2$ and the planarity is preserved in this process, see Fig.~\ref{F_split_0}.
\begin{figure} [!h]
\centerline{\scalebox{0.8}{\includegraphics{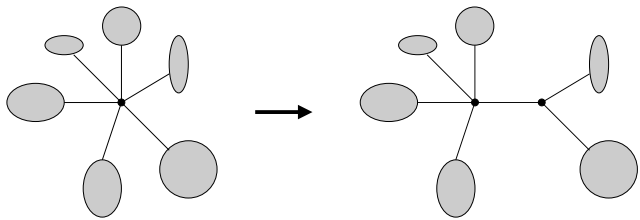}}}
\caption{The vertex splitting process.} \label{F_split_0}
\end{figure}
We sometimes let the relative probability of this splitting, $w_{j,k}$, 
depend on $j$ and $k$ but
sometimes we take the uniform distribution on the $i(i+1)/2$
possibilities.   The parameters $w_i$ and $w_{j,k}$ define the model.  
We shall consider both the case when there is
a maximal vertex degree $d_{\mathrm{max}}$, i.e., $w_{j,k}=0$ if either $j$ or $k$ exceeds
the  cutoff $d_{\mathrm{max}}$, and the general case, but most of our explicit results will be for a finite $d_{\mathrm{max}}$.

This growth process 
becomes the preferential
attachment growth model \cite{AlbertBara,MASS_DIST} when we take $w_{j,k}=0$ unless $j$ or $k$ is
equal to 1. As already mentioned, similar processes of random vertex splitting and random vertex merging are used as ergodic moves in Monte Carlo simulations of triangulations in quantum gravity, see Fig.~\ref{f:MC}.
\begin{figure} [!h]
\centerline{\scalebox{0.8}{\includegraphics{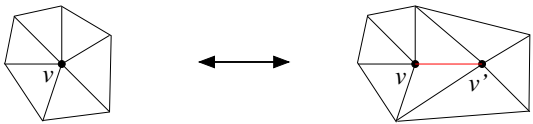}}}
\caption{Vertex splitting (left to right) and vertex merging (right to left) in a triangulation.} \label{f:MC}
\end{figure}
The connection with the models of \cite{alphamodel,ChenFordWinkel} will be discussed later.

We are interested in the structure of the trees after a large number
of steps in the growth process.  The main quantities we study are the
distribution of the order of vertices, the correlation between the
degrees of neighbouring vertices and the intrinsic Hausdorff dimension of the
trees.  In order to get at the Hausdorff dimension we shall define and
analyse certain general correlation functions which describe the subtree
structure of the random tree.

The calculations are simpler when the splitting weight $w_j$ depends linearly on $j$ ($w_j= a j+b$) because then the probability
$p_j$ of splitting a vertex of degree $j$ in a tree $T$ is given by
$p_j={w_j / F(|T|)}$
where $F$ is a function only of the number of
vertices $|T|$ in $T$ but does not depend on the internal structure of $T$. In the general case our results will rely on a physically plausible, but still unproven, mean field assumption. We shall provide numerical evidence for this assumption.

\subsection{Results and organisation of the paper.}
\label{ssResults}
In section \ref{S:genfunc} we first give the precise definition of our vertex splitting model.
We then study the special case where the splitting probability weights are linear with the initial vertex degree $i$ and focus on the vertex degree distribution. 

In section \ref{s:genfunc} we write exact recurrence equations for the general local vertex degree probability distributions. 
Using the Perron-Frobenius theorem \cite{seneta} we show that the single vertex degree probability distribution $\boldsymbol{\rho}=\{\rho_k\}$ ($\rho_k$ is the density of vertices with a given degree $k$)  has a well
defined limit as the size of the tree goes to infinity.    
We furthermore show that  $\boldsymbol{\rho}=\{\rho_k\}$ is given by an eigensystem equation of the form $\mathbf{B}\boldsymbol{\rho}=\lambda\boldsymbol{\rho}$, where  $\mathbf{B}$ is a matrix depending on the weights of the model. The proof depends on the matrix $\mathbf{B}$ being diagonalizable. Similar techniques have been used to find the asymptotic degree distribution in random recursive trees \cite{Svante}.

In section \ref{s:explicitsolutions} we find explicit solutions to the above eigensystem equation in special cases when the matrix $\mathbf{B}$ is diagonalizable. We study in almost full generality the cases when $d_{\mathrm{max}}=3$  and $d_{\mathrm{max}}=4$ and a particular choice of weights for arbitrary $d_{\mathrm{max}}$. In section \ref{ssDiscWeight} we show that the condition of diagonalizability of $\mathbf{B}$ is not very restrictive.

In section \ref{ssMeanField} we relax the condition of linearity on the splitting weights $w_i$.  We argue that mean field theory is still valid and that the degree probability distribution $\boldsymbol{\rho}$ is still given by the same linear eigensystem equation as in the linear case. 
We give good numerical evidence of the validity of these mean field equations for $d_{\mathrm{max}}=3$ trees.

For infinite $d_{\mathrm{max}}$ and
linear and uniform splitting probabilities we can still calculate the vertex degree
distribution in closed form using mean field theory. This is done in section \ref{ssdinfinite}, where we show that 
it agrees with numerical simulations. The vertex degree distribution 
is found to fall off factorially in this case.  The vertex
degree distribution always becomes independent of the initial tree as time
goes to infinity. 

In section \ref{s:subtree} we study probabilities associated to the local subtree structure of the tree, as seen from any vertex, and as a function of its creation time $s$. More precisely, we are able to write recursion relations for the probability $p_k(\ell_1,\cdots, \ell_k;s)$ that the vertex created at time $s$ is of degree $k$, with the $k$ subtrees with fixed respective sizes $\ell_1,\cdots, \ell_k$.

In section \ref{ssSubTree} we write the detailed (and quite complicated) recursion relations in the linear splitting weights case, and we show that they simplify for the probabilities $p_k(\ell_1,\cdots, \ell_k)$ obtained by summing over the time of creation. 

In section \ref{ss2PtFunc} we derive a recursion relation for a simple two-point function $q_{ki}(\ell_1,\ell_2)$ related to the decomposition of a tree into two subtrees, that will be useful. The substructure probabilities should find several applications in the studies of these trees. 

In section \ref{s:Hausdorff} we study the scaling properties of the trees by computing their Hausdorff dimension (or intrinsic fractal dimension).

In section \ref{ssDefHausdff} we define the radius and the fractal dimension $d_H$ of a tree (since there are several almost equivalent definitions).

In section \ref{ssGeodDist} we explain how the radius (hence the fractal dimension) is obtained from the two-point function defined in \ref{ss2PtFunc}.

In section \ref{ScalingDF}, using a natural scaling hypothesis for the two-point function, we show that the fractal dimension $d_H$ is also  given by the solution of an eigensystem equation of the form 
$\mathbf{C}\boldsymbol{\omega}=w_2/d_H\,\boldsymbol{\omega}$, where $\mathbf{C}$ is a complicated matrix which is a function of weights of the model. We use a Perron-Frobenius argument to prove that this eigensystem equation has a unique physical solution.

In section \ref{ssDFbound} we establish bounds on the fractal dimension. We show that it can vary continuously with the splitting weights between $1$ and $+\infty$.

In section \ref{ss=expd=3} we study the case of  ($d_{\mathrm{max}}=3$) trees with linear uniform splitting weights. We compare the analytical result for $d_H$ with numerical simulations and show that the agreement is good.

In section \ref{ss:HausGen} we relax the condition of linearity on the splitting weights $w_i$.  We argue that mean field theory is still valid and that the fractal dimension $d_H$ is still given by the same linear eigensystem equation as in the linear case. 
We present preliminary numerical evidence of the validity of these mean field equations for binary trees.

In section \ref{s:correlations} we study the correlations between the degrees of neighbouring
vertices.  This amounts to studying the density $\rho_{ij}$ 
of links with vertices of degrees $i$ and $j$.  

In section \ref{ssVVlinear} we write general equations for these correlations in the linear splitting weight case.  Some technical details are left  to appendix \ref{A:corr}. 

In the simple case of $d_{\mathrm{max}}=3$ trees these correlations can be calculated explicitly, and compared with numerical simulations. This is done in section \ref{ssVVd=3}, 
and one sees that there are always nontrivial correlations, i.e.,
$\rho_{ij}\neq {\rho_i\rho_j/ ( 1-\rho_1)^{-1}}$ 
where $\rho_i$ is the density of vertices
of degree $i$ (the r.h.s. obtained if the degrees of neighbouring vertices are completely independent).  Correlations between vertices which are
further away from each other can be studied by the same method.
We show that our results for the degree-degree correlations are different from those for a different class of
statistical random graphs \cite{bialas}.

In section \ref{VVnonlinear} we extend our results for the case of non-linear splitting weights, assuming mean field theory. We show that there is a very good agreement between our analytical results and numerical simulations (still in the case of $d_{\mathrm{max}}=3$ trees).

Finally section \ref{s:Disc} is the conclusion. We discuss there in more detail the relationship between our model and 
other models of random trees, in particular models of phylogenetic trees \cite{alphamodel,AldousBeta,HassMiermontPitmanWinkel,PitmanWinkel,ChenFordWinkel,AldousClad}, and we point out some open questions.
 
\section{The vertex splitting model} 
\label{S:genfunc}
\subsection{Description of the model}
\label{ssDescr}
A rooted planar tree is a tree graph embedded into the plane $\R^2$ which 
contains a distinguished vertex that we call the {\em{root}}.  We will
always assume that the root has order one.  The
planarity condition is for convenience only and can also be
implemented by cyclically ordering the links incident on each vertex.

Let $\mathcal{T}$ be the collection of all rooted planar trees for 
which every vertex has finite degree at most 
\begin{equation}
\label{ddmax}
d_{\mathrm{max}}=d\qquad\text{maximal vertex degree}.
\end{equation}
We shall discuss later the case $d_{\mathrm{max}}=\infty$.
Denote the number of vertices in a tree $T \in \mathcal{T}$ by $|T|$ and the number of vertices of degree $i$ in $T$ by $n_i(T)$. Let $\mathcal{T}_N$ 
be those trees $T \in \mathcal{T}$ with $|T| = N$. Let
$$\mathbf{M} = 	\begin{bmatrix}
	0       & w_{1,2} & w_{1,3} & \cdots& w_{1,d-1} & w_{1,d} \\
	w_{2,1} & w_{2,2} & w_{2,3} & \cdots& w_{2,d-1} & w_{2,d} \\
        w_{3,1} & w_{3,2} & w_{3,3} & \cdots& w_{3,d-1} & 0 \\
	w_{4,1} & w_{4,2} & w_{4,3} & &    0      & 0 \\
 	\vdots  &  \vdots & \vdots  & \udots& \vdots   & \vdots \\
        w_{d,1} & w_{d,2} & 0       & \cdots& 0   & 0
	\end{bmatrix}
	  $$
be a symmetric matrix with nonnegative entries that we call 
\textit{partitioning weights}. 
We define a collection of 
non-negative numbers called \textit{splitting weights},
$w_1,w_2,\ldots , w_d $, by
\begin{eqnarray} \label{100}
w_i &=& \frac{i}{2}\sum_{j=1}^{i+1} w_{j,i+2-j}.
\end{eqnarray}

We now define a growth rule for planar trees which we call 
{\it vertex-splitting}. Given a tree $T\in\mathcal{T}_N$
\begin{enumerate}
\item[(i)] 
Choose a vertex $v$ of $T$ with probability 
${w_i}/{\mathcal{W}(T)}$ where $i$ is the order of $v$ and  
\begin{eqnarray} \label{eqn:the_weight}
	\mathcal{W}(T) &=& \sum_{j=1}^{d}w_j n_j(T).
\end{eqnarray}
\item[(ii)] 
Partition the edges incident with $v$ into two disjoint sets $V$ and 
$V'$ of adjacent edges with probability 
\begin{equation}
\label{eq:pww}
p_{k,i+2-k}=\dfrac{w_{k,i+2-k}}{w_i}\ .
\end{equation}
The set 
$V$ contains $k-1$ of the edges and $V'$ contains $i-(k-1)$ of these 
edges, $k=1,\ldots ,i$. 
For a given $k$, all such partitionings are taken to be 
equally likely.
\item[(iii)] 
Move all edges in $V'$ from $v$ to a new vertex $v'$ and create an edge 
joining $v$ to $v'$.  If $v$ is the root, then the new vertex of order
one is taken to be the root.
\end{enumerate}
This vertex-splitting operation is illustrated in figure \ref {F_split} (the root vertex is circled).
\\[1em]
\begin{figure} [!h]
\centerline{\scalebox{0.8}{\includegraphics{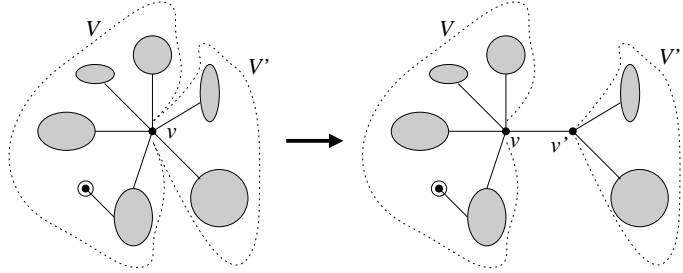}}}
\caption{Illustration of the splitting process for $i=6$ and $k=5$.} \label{F_split}
\end{figure}
After the splitting operation, the degree of vertex $v$ is $k$ and 
the degree of vertex $v'$ is $i+2-k$. 
Since the maximum allowed vertex degree is $d$ we define 
$w_{d+1,1}=w_{1,d+1}=0$, 
i.e.~ we do not allow splittings of vertices of 
degree $d$ that produce vertices of degree $d+1$.
If the partitioning weights are chosen such that 
$w_{i,j} = 0$ for $i\neq 1$ 
or $j\neq 1$, then the vertex-splitting model is equivalent to the 
preferential attachment model discussed in \cite{MASS_DIST}.

In Sections 3 and 4 we will find it convenient to label the vertices
according to their time of creation.  In this case we append the
following to our rules:
\begin{enumerate}
\item[(iv)]
Let $a$ be the label of the vertex chosen in (i).
If $v$ is further away from the root
than $v'$ in step (iii) then we let $v$ keep the label $a$
and give $v'$ the label $N$.
Otherwise label $v$ with $N$ and label $v'$ with $a$.
\end{enumerate}
This book-keeping device has no effect on the dynamics of the model.
The single root vertex (which is the only tree in $\mathcal{T}_1$) has
label 0.  We will often think of the number of vertices (or
equivalently, the number of links) as time and denote it by $t+1$ (or
$t$ in the case of links) assuming we start with the single vertex
tree at time $t=0$.

If the partitioning weights are chosen such that the splitting 
weights are linear, 
\begin{equation} \label{linw}
	w_i = ai + b
\end{equation}
for some $a,b\geq 0$ then the model is easier to analyze since the 
weight of a tree $T \in \mathcal{T}$ depends only on the size 
of the tree
\begin {equation} \label{101}
 	\mathcal{W}(T)  = (2a+b)|T|-2a.  
\end {equation}
This is easily seen from the two constraints on the vertex degrees,
\begin {equation} \label{con}
	\sum_{i=1}^{d}n_i(T) \;=\; |T| \quad \mbox{ and } \quad
	\sum_{i=1}^{d}i n_i(T) \;=\; 2(|T|-1).
\end{equation} 
By abuse of notation, in this case we will write  $\mathcal{W}(|T|) = \mathcal{W}(T)$. We will also sometimes restrict to uniform partitioning weights, i.e.~
\begin{eqnarray}
	w_{i,k+2-i} &=& \left\{
 	\begin{array}{cll}
 	w_k/ {k+1\choose 2} 
	& \text{for $i=1,\ldots,k+1$,} & \mbox{ if } k < d,\\[1em]
 	w_k/ {k\choose 2} 
	& \text{for $i = 2,\ldots,k$,} & \mbox{ if } k = d.
	\end{array} 
	\right.
\end{eqnarray}

\subsection{Distribution of vertex degrees for linear splitting weights} \label{s:genfunc}
Start from a finite tree $T_0$ at time $t_0 =|T_0|$ and perform vertex
splitting according to the rules described in the previous subsection
$\tau$ times. We then obtain a tree in $\mathcal{T}_{t_0+\tau}$. Let $t = t_0 + \tau$.   
The vertex splitting operation induces a probability measure $\mu_t$ on 
$\mathcal{T}_{t}$, which of course depends on the initial
tree $T_0$. In this section we will drop $T_0$ from function arguments with the understanding that it is implied, unless otherwise stated.

Let $p_t(n_1,\ldots,n_d)$ be the probability that 
$T\in\mathcal{T}_{t}$ has $(n_1(T),\ldots, n_d(T)) = (n_1,\ldots , n_d)$ 
according to the measure $\mu_t$. We wish to find the mean value of $n_k(T)$ for $T\in {\mathcal T}_t$ 
with respect to the measure $\mu_t$. Denote this value by $\overline{n}_{t,k}$. We define the vertex degree densities $\rho_{t,k} \equiv  \overline{n}_{t,k}/t$ and with some conditions on the partitioning weights we will prove the existence of the limit $$\lim_{t \rightarrow \infty}\rho_{t,k} \equiv \rho_k$$ and show that the $\rho_k$ satisfy a system of linear equations.

Let ${\bf x}=(x_1,\ldots ,x_d)\in {\mathbb R}^d$ and define the probability generating function
\begin{eqnarray} 
	\mathcal{H}_t({\bf x}) &=& 
	\sum_{n_1+\cdots n_d=t} 
		p_t(n_1,\ldots,n_d ) x_1^{n_1}\cdots x_d^{n_d} \label{103}
\end{eqnarray}
\begin {proposition}
The probability genereting function $\mathcal{H}_t({\bf x})$ satisfies the recurrence 
\begin{eqnarray} \label {recur}
	\mathcal{H}_{t+1}(\bx ) &=& 
	\sum_{n_1 + \cdots +n_d=t} 
	\dfrac{p_t(n_1,\ldots,n_d)}{\sum_{i=1}^{d} 
	n_i w_{i}} \, \bc (\bx ) \cdot 
	\nabla ( x_1^{n_1} \cdots x_d^{n_d})
\end{eqnarray}
for all $t \geq t_0$, where
\begin{eqnarray}
	 \bc (\bx ) &=& 
	\left(c_1(\bx ),c_2(\bx ),\ldots,c_{d}(\bx )\right)
\end{eqnarray}
with
\begin{eqnarray} \label{csum}
	c_i(\bx ) &=& 
	\dfrac{i}{2} \sum_{j=1}^{i+1} w_{j,i+2-j} x_j x_{i+2-j}
\end{eqnarray}
and $\nabla = \Big(\partial/\partial x_1,\ldots,
 \partial/\partial x_d\Big)$ is the standard gradient operator.
\end {proposition}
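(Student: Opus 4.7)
The plan is to derive the recurrence directly from the Chapman--Kolmogorov identity for the Markov chain on degree compositions $\mathbf{n}=(n_1,\ldots,n_d)$ induced by the vertex-splitting dynamics. The first step is to read off the one-step transition rate from rules (i)--(iii). Rule (i) contributes a factor $n_i w_i/\sum_k n_k w_k$ for selecting one of the $n_i$ vertices of degree $i$. Rule (ii), after summing the partitioning weight $w_{j,i+2-j}/w_i$ over the physically distinct contiguous cuts of the $i$ edges that produce the ordered pair of daughter degrees $(j,i+2-j)$, contributes a further factor $(i/2)\,w_{j,i+2-j}/w_i$; the combinatorial factor $i/2$ is precisely what makes the definition $w_i=(i/2)\sum_{j=1}^{i+1} w_{j,i+2-j}$ equivalent to the normalization $\sum_{j=1}^{i+1}(i/2)\,w_{j,i+2-j}/w_i=1$, so that the conditional distribution over daughter degrees sums to one. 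The convention $w_{d+1,1}=w_{1,d+1}=0$ suppresses any out-of-range contributions. Multiplying the two factors gives
\[
\P\bigl(\mathbf{n}\longrightarrow \mathbf{n}-\mathbf{e}_i+\mathbf{e}_j+\mathbf{e}_{i+2-j}\bigr)\;=\;\frac{n_i\,(i/2)\,w_{j,i+2-j}}{\sum_k n_k w_k},
\]
where $\mathbf{e}_k$ denotes the $k$-th standard basis vector in $\R^d$.

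The second step applies Chapman--Kolmogorov: $p_{t+1}(\mathbf{n}')$ equals the sum over all predecessors $\mathbf{n}$ and all pairs $(i,j)$ with $\mathbf{n}'=\mathbf{n}-\mathbf{e}_i+\mathbf{e}_j+\mathbf{e}_{i+2-j}$ of $p_t(\mathbf{n})$ times the rate above. Multiplying both sides by $\mathbf{x}^{\mathbf{n}'}$, summing over $\mathbf{n}'$, and interchanging the order of summation yields
\[
\mathcal{H}_{t+1}(\mathbf{x})\;=\;\sum_{|\mathbf{n}|=t}\frac{p_t(\mathbf{n})}{\sum_k n_k w_k}\sum_{i=1}^d n_i\sum_{j=1}^{i+1}\frac{i}{2}\,w_{j,i+2-j}\,\mathbf{x}^{\mathbf{n}-\mathbf{e}_i+\mathbf{e}_j+\mathbf{e}_{i+2-j}}.
\]
The last step is purely algebraic: writing $\mathbf{x}^{\mathbf{n}-\mathbf{e}_i+\mathbf{e}_j+\mathbf{e}_{i+2-j}}=(x_j x_{i+2-j}/x_i)\,\mathbf{x}^{\mathbf{n}}$, the innermost sum collapses by the definition of $c_i$ into $c_i(\mathbf{x})\,\mathbf{x}^{\mathbf{n}}/x_i$, and then the identity $n_i\,\mathbf{x}^{\mathbf{n}}/x_i=\partial(\mathbf{x}^{\mathbf{n}})/\partial x_i$ rewrites the $i$-sum as $\mathbf{c}(\mathbf{x})\cdot\nabla(\mathbf{x}^{\mathbf{n}})$, reproducing the stated recurrence.

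The main subtlety lies in Step one: correctly counting the equivalent contiguous cuts of the cyclically ordered edges incident to $v$ that yield a prescribed ordered pair of daughter degrees. That counting is the combinatorial content of the factor $i/2$ and of the normalization defining $w_i$ from the partitioning weights. Once it is in hand, verifying that the outgoing probabilities sum to one is immediate, and the remainder of the argument is routine generating-function bookkeeping.
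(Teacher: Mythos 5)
Your proposal is correct and follows essentially the same route as the paper's own proof: identify the one-step transition probability $n_i\,(i/2)\,w_{j,i+2-j}/\sum_k n_k w_k$ on degree compositions (with the factor $i/2$ per ordered daughter pair absorbing the double-counting of unordered pairs under the symmetry $w_{j,k}=w_{k,j}$, consistent with the normalization $w_i=\tfrac{i}{2}\sum_j w_{j,i+2-j}$), then convert $n_i\,\mathbf{x}^{\mathbf{n}}/x_i$ into $\partial_i\mathbf{x}^{\mathbf{n}}$ and collapse the $j$-sum into $c_i(\mathbf{x})$. No gaps; the Chapman--Kolmogorov framing is just a more explicit statement of the paper's three-step decomposition.
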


\begin {proof}
Any tree contributing to $\mathcal{H}_{t+1}$ can be obtained by splitting a vertex in a tree on t vertices. This process can be divided into three steps:
\begin{enumerate}
 \item[(i)] Choose a tree $T \in \mathcal{T}_t$ with vertex degree distribution $(n_1,\ldots,n_d)$ with probability $p_t(n_1,\ldots,n_d)$.
 \item[(ii)] Select a vertex in $T$ of degree $i$ with probability $n_iw_i/\sum_j{n_jw_j}.$
 \item[(iii)]  Partition the edges incident to the chosen vertex into two sets $V$ and $V'$ of adjacent edges with $j-1$ and $i+1-j$ elements, respectively, with probability $iw_{j,i+2-j}/w_i$ if $j \neq i+2-j$ and with probability $\frac{i}{2} w_{j,i+2-j}/w_i$ if $j = i+2-j$. In the latter case there is a symmetry between $V$ and $V'$ which accounts for the factor $1/2$.
\end{enumerate}
Multiplying together the probabilities in (i)--(iii) gives the probability of removing a vertex of degree $i$ and creating two new vertices of degree $j$ and $i+2-j$. In terms of the generating function this amounts to replacing $x_1^{n_1}\cdots x_d^{n_d}$ by $x_{i}^{-1} x_{j} x_{i+2-j} x_1^{n_1}\cdots x_d^{n_d}$. The probability is 

\begin {equation} \label{probgen}
\frac{p_t(n_1,\ldots,n_d)}{\sum_j{n_jw_j}} n_i \times \left\{
 	\begin{array}{cll}
 	i w_{j,i+2-j} 
	& \text{$\mathrm{if}$ $j \neq i+2-j$,} & \\[1em]
 	\frac{i}{2} w_{j,i+2-j} 
	& \text{$\mathrm{otherwise.}$} &
	\end{array} 
	\right.
\end{equation} 

The partial derivative $\partial / \partial x_i$ in $\nabla$ takes care of removing a vertex of degree $i$ and provides the factor $n_i$. In $c_i(\mathbf{x})$, the factors $x_j x_{i+2-j}$ add two vertices of degree $j$ and $i+2-j$ respectively and the appropriate weights are given. Now sum (\ref{probgen}) over $j= 1,\ldots,i+1$,  i.e.~over all possible partitionings in (iii), to obtain the sum in (\ref{csum}). Note that terms for which $j \neq i+2-j$ appear twice in the sum, since $w_{p,q}$ is symmetric in $p$ and $q$, and the term for which $j = i+2-j$ appears once. This explains the factor $1/2$ in front of the sum in (\ref{csum}). The dot product of $\mathbf{c}(\mathbf{x})$ and $\nabla$ accounts for the sum over all vertex degrees, and finally sum over all vertex degree configurations in the initial tree to obtain (\ref{recur}.\end {proof}

For linear weights (\ref{linw}), equation (\ref{recur}) reduces to
\begin{equation} \label{simprec}
	\mathcal{H}_{t+1}(\bx ) = 
	\frac{1}{\mathcal{W}(t)} 
	\bc (\bx )\cdot\nabla\mathcal{H}_{t}(\bx ).
\end{equation}
The remainder of this subsection concerns linear weights only.
 We have
\begin{equation}
	\overline{n}_{t,k} \;=\;
	\sum_{n_1+...+n_d = t}p_t(n_1,...,n_d)n_k \; =\;
	 \partial_k{\mathcal H}_t(\bx )|_{\bx = {\mathbf 1} },
\end{equation}
where ${\mathbf 1} =(1,1,\ldots ,1)$.
To get a recursion equation for $\overline{n}_{t,k}$, differentiate 
both sides of (\ref{simprec}) with respect to $x_k$ and set 
$\bx = {\mathbf 1} $ to find
\begin{eqnarray} \nonumber
	\overline{n}_{t+1,k} &=& 
	\frac{1}{\mathcal{W}(t)}
	\left(\sum_{i=k-1}^{d} i w_{k,i+2-k} \overline{n}_{t,i}  
	 + \sum_{i=1}^d w_i \partial_i \partial_k 
	 \mathcal{H}_t(\bx )|_{\bx = {\mathbf 1} }\right).\\ \label{cr}
\end{eqnarray} 
Since the weights are linear we can use the constraints in (\ref{con}) 
to rewrite the last term in \rf{cr} as
\begin{eqnarray}
	\sum_{i=1}^d w_i \partial_i \partial_k \mathcal{H}_t
	(\bx )|_{\bx = {\mathbf 1}} &=& 
	\left(-w_k + \mathcal{W}(t)\right)\overline{n}_{t,k}.
 \end{eqnarray} 
Inserting this into (\ref{cr}) we see that the equations close 
\begin{eqnarray}  \label{aresult}
	\overline{n}_{t+1,k} &=& 
	\dfrac{1}{\mathcal{W}(t)}\left( -w_k\overline{n}_{t,k} + 
	 \sum_{i=k-1}^{d} i w_{k,i+2-k} \overline{n}_{t,i} \right) 
	 +\overline{n}_{t,k}.  
\end {eqnarray}
 We can also write the recursion
 in terms $\rho_{t,k}$ and find
\begin {equation} \label{result}
	\rho_{t+1,k} = 
	\frac{t}{\mathcal{W}(t)}\left(-w_k\rho_{t,k}+\sum_{i=k-1}^{d}i 
	 w_{k,i+2-k} \rho_{t,i} \right) + t(\rho_{t,k}-\rho_{t+1,k}).
\end {equation}
The above equation can be put in the matrix form
\begin{eqnarray}
	\mbox{\boldmath $\rho$}_{t+1} &=& \mathbf{A}_t \mbox{
	\boldmath $\rho$}_t
\end{eqnarray}
where
\begin{eqnarray} \nonumber
	\mbox{{\boldmath $\rho$}}_t &=& 
	\left(\rho_{t,1},\rho_{t,2},\ldots,\rho_{t,d}\right)^T, 
	\hspace*{4ex} 
	\mathbf{A}_t \;=\; \frac{t}{t+1}\Bigg(\mathbf{I} + 
	\frac{1}{\mathcal{W}(t)}\mathbf{B}\Bigg), \\
\end{eqnarray}

\begin{eqnarray} \nonumber
	\mathbf{B} = {\footnotesize{\left[\begin {array} {ccccll}    
  w_{1,2}&2w_{1,3}&\cdots&(d-2)w_{1,d-1}&(d-1)w_{1,d}& \multicolumn{1}{c}{0} \\
  w_{2,1}&2w_{2,2}&\cdots&(d-2)w_{2,d-2}&(d-1)w_{2,d-1}&dw_{2,d}\\
  0&2w_{3,1}&\cdots&(d-2)w_{3,d-3}&(d-1)w_{3,d-2}&d w_{3,d-1}\\
  \vdots & \ddots & \ddots & \vdots & \multicolumn{1}{c}{\vdots} & \multicolumn{1}{c}{\vdots}  \\
  \vdots &  & \ddots & (d-2)w_{d-1,1} & (d-1)w_{d-1,2} & d w_{d-1,3} \\
  0 & \cdots & 0 & 0 & (d-1)w_{d,1} & d w_{d,2}
	\end {array}\right]}} 
	- \text{diag}(w_i)_{1\leq i\leq d} \\ \label{thematrix}
\end{eqnarray}
and $\mathbf{I}$ is the identity matrix.

If we denote the vertex degree densities of the initial tree $T_0$
by $\mbox{\boldmath $\rho$}_{t_0}$ we can write the densities for trees on 
$t$ vertices which grow from the initial tree as
\begin{equation} \label{finite_n_solution}
	\mbox{\boldmath $\rho$}_{t} = 
	\left(\prod_{i=t_0}^{t-1} \mathbf{A}_i\right) \mbox{\boldmath $\rho$}_{t_0} 
	 \;\;=\;\; \frac{t_0}{t} \left(\prod_{i=t_0}^{t-1} 
	 \left(\mathbf{I}  + \frac{1}{\mathcal{W}(i)}\mathbf{B}\right)
	 \right) \mbox{\boldmath $\rho$}_{t_0}.
\end{equation}
We will establish convergence of the right hand side by imposing some technical restrictions on $\mathbf{B}$. It turns out that the limiting distribution is independent of the initial distribution $\mbox{\boldmath $\rho$}_{t_0}$. We begin with some necessary lemmas.

\begin {lemma} \label{sumrules}
If $\lambda$ is an eigenvalue of $\mathbf{B}$ with corresponding eigenvector
\\ ${\mathbf e}_\lambda =(e_{\lambda 1},\ldots ,e_{\lambda d})$, i.e.
\beq{eigenvalueeq}
{\mathbf B}{\mathbf e}_\lambda = \lambda {\mathbf e}_\lambda,
\eeq
then the following holds:
\begin {eqnarray}
\lambda \sum_{i=1}^d  e_{\lambda i} &=& \sum_{i=1}^d w_i
e_{\lambda i} \quad\quad \text{and}\label{2.23}\\
\lambda \sum_{i=1}^d i e_{\lambda i} &=& 2 \sum_{i=1}^d w_i
e_{\lambda i}.\label{2.24}
\end {eqnarray}

\begin {proof}
We prove the second identity.  The first
identity is established by a
similar calculation.  
Multiply the $i$-th component of the eigenvalue
equation \rf{eigenvalueeq} by $i$ and sum over $i$ to get
\begin {eqnarray} 
\lambda \sum_{i=1}^d i  e_{\lambda i} &=& -\sum_{i=1}^d i w_i
e_{\lambda i} + \sum_{i=1}^d i \sum_{k=i-1}^d k w_{i,k+2-i}
e_{\lambda k} \nonumber \\
&=& -\sum_{i=1}^d i w_i  e_{\lambda i} + \sum_{k=1}^d k
\left(\sum_{i=1}^{k+1} i w_{i,k+2-i} \right) e_{\lambda k} .
\end {eqnarray}
Using $w_{i,j} = w_{j,i}$ we find that 
\begin{equation}
\sum_{i=1}^{k+1} i
w_{i,k+2-i} = \frac{k+2}{2} \sum_{i=1}^{k+1} w_{i,k+2-i}
\end{equation} 
and this together with the
definition of the splitting weights \rf{100} 
proves the identity.
\end {proof}

\end {lemma}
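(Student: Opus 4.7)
The idea is that both identities are linear-algebraic shadows of the two constraints in \rf{con}: the first encodes conservation of vertex number, the second conservation of total degree. Concretely, I would start from the componentwise form of the eigenvalue equation \rf{eigenvalueeq} by reading off $\mathbf{B}$ from \rf{thematrix}:
\begin{equation*}
\lambda e_{\lambda i}\;=\;-w_i\,e_{\lambda i}+\sum_{k=i-1}^{d}k\,w_{i,k+2-i}\,e_{\lambda k},\qquad i=1,\ldots,d.
\end{equation*}
To get \rf{2.23} I would simply sum this over $i$; to get \rf{2.24} I would multiply by $i$ first and then sum. Both manipulations are of the same type as the calculation the authors carry out for \rf{2.24}, so I will just describe what is common.

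After summation, in each case the left-hand side is already in the desired form ($\lambda\sum_i e_{\lambda i}$ or $\lambda\sum_i i\,e_{\lambda i}$), and the first term on the right contributes $-\sum_i w_i e_{\lambda i}$ or $-\sum_i i\,w_i e_{\lambda i}$. The real work is the double sum. I would swap the order using $\sum_{i=1}^{d}\sum_{k=i-1}^{d}=\sum_{k=1}^{d}\sum_{i=1}^{k+1}$, which turns the double sum into
\begin{equation*}
\sum_{k=1}^{d}k\,e_{\lambda k}\sum_{i=1}^{k+1}w_{i,k+2-i}\qquad\text{or}\qquad\sum_{k=1}^{d}k\,e_{\lambda k}\sum_{i=1}^{k+1}i\,w_{i,k+2-i}.
\end{equation*}

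Now the key algebraic input is the symmetry $w_{p,q}=w_{q,p}$ combined with the definition \rf{100}. The symmetry, via the change of summation variable $i\mapsto k+2-i$, gives
\begin{equation*}
\sum_{i=1}^{k+1}i\,w_{i,k+2-i}\;=\;\frac{k+2}{2}\sum_{i=1}^{k+1}w_{i,k+2-i},
\end{equation*}
while \rf{100} identifies $\sum_{i=1}^{k+1}w_{i,k+2-i}$ with $2w_k/k$. Substituting these into the two cases collapses the double sum to $2\sum_k w_k e_{\lambda k}$ (respectively $\sum_k(k+2)w_k e_{\lambda k}$). For the $i$-weighted sum, the $\sum_k k\,w_k e_{\lambda k}$ piece that pops out exactly cancels the $-\sum_i i\,w_i e_{\lambda i}$ contributed by the diagonal term, leaving \rf{2.24}; for the unweighted sum the corresponding cancellation yields \rf{2.23}.

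The only step that requires any care is the index swap together with the symmetrization trick that produces the $(k+2)/2$ factor; everything else is routine reshuffling. I do not anticipate any obstacle beyond keeping the index ranges straight, in particular the boundary cases at $i=1$ and $i=d$ (where some $w_{j,\cdot}$ vanish by the cutoff convention $w_{d+1,1}=w_{1,d+1}=0$, consistently with the zeros in $\mathbf{B}$).
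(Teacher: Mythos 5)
Your proposal is correct and follows essentially the same route as the paper: write the eigenvalue equation componentwise, sum (with or without the weight $i$), interchange the order of summation via $\sum_{i=1}^{d}\sum_{k=i-1}^{d}=\sum_{k=1}^{d}\sum_{i=1}^{k+1}$, and then use the symmetry $w_{p,q}=w_{q,p}$ together with the definition \rf{100} to evaluate the inner sum. The only difference is cosmetic — you carry out both identities in parallel where the paper proves \rf{2.24} and declares \rf{2.23} "similar" — and your handling of the cancellation and of the boundary conventions is accurate.
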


\begin{lemma} \label {perronfrob}
If

\begin{enumerate}
 \item  $w_{k,1}=w_{1,k} > 0$ for $k = 1,\ldots,d$ (i.e.~ it is possible to
produce vertices of maximal degree $d$ from vertices of degree $j<d$) and
\item  $w_{i,d+2-i} >0$ for at least one $i$ with $2\leq i \leq d-1$  (i.e.~it is possible to split vertices of maximal degree $d$),
\end{enumerate}
then $w_2$ is a positive, simple eigenvalue of $\mathbf{B}$.  All
other eigenvalues of $\mathbf{B}$ have a smaller 
real part. The corresponding eigenvector ${\mathbf e}_{w_2}$ can be
taken to have all entries positive. 
\end{lemma}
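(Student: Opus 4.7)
The plan is to shift $\mathbf{B}$ to a nonnegative matrix, apply Perron--Frobenius in its irreducible form, and then use Lemma~\ref{sumrules} together with the linearity $w_i=ai+b$ to pin down the Perron eigenvalue as $w_2$. Inspecting (\ref{thematrix}), every off-diagonal entry of $\mathbf{B}$ has the form $c\,w_{r,c+2-r}\ge 0$ (with $r,c$ the row and column indices), and the diagonal of row $r$ is $r\,w_{r,2}-w_r$. Choosing a constant $C\ge\max_r w_r$, the matrix $\widetilde{\mathbf{B}}:=\mathbf{B}+C\mathbf{I}$ has all entries nonnegative, shares its eigenvectors with $\mathbf{B}$, and its eigenvalues are those of $\mathbf{B}$ shifted by $C$; so simplicity and the strict ordering of real parts transfer unchanged between $\widetilde{\mathbf{B}}$ and $\mathbf{B}$.

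Next I would verify irreducibility of $\widetilde{\mathbf{B}}$ by showing that the digraph $G$ on $\{1,\dots,d\}$ with an arc $r\to c$ whenever $(\mathbf{B})_{rc}>0$ with $r\ne c$ is strongly connected. Hypothesis~(i) produces $(\mathbf{B})_{r,r-1}=(r-1)w_{r,1}>0$ for every $r\ge 2$, giving the chain $d\to d-1\to\cdots\to 1$, and also $(\mathbf{B})_{1,c}=c\,w_{1,c+1}>0$ for $c=1,\dots,d-1$, giving arcs $1\to c$ for every $c\le d-1$. Hypothesis~(ii) supplies an index $i_0\in\{2,\dots,d-1\}$ with $w_{i_0,d+2-i_0}>0$, hence $(\mathbf{B})_{i_0,d}=d\,w_{i_0,d+2-i_0}>0$; combined with $1\to i_0$ this makes $d$ reachable from $1$ as well. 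Therefore $G$ is strongly connected and $\widetilde{\mathbf{B}}$ is irreducible.

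The Perron--Frobenius theorem for irreducible nonnegative matrices \cite{seneta} then produces a simple real eigenvalue $\mu$ of $\widetilde{\mathbf{B}}$ equal to its spectral radius, with a strictly positive eigenvector $\mathbf{e}$, and every other eigenvalue $\mu'$ satisfies $\mathrm{Re}(\mu')<\mu$ (either $|\mu'|<\mu$, or $\mu'$ lies on the peripheral cyclic orbit $\mu e^{2\pi ik/h}$ with $k\not\equiv 0\pmod{h}$, which still has $\cos(2\pi k/h)<1$). Subtracting $C$, the matrix $\mathbf{B}$ has a simple real eigenvalue $\lambda_{\mathrm{PF}}=\mu-C$ with the same positive eigenvector $\mathbf{e}$ and with every other eigenvalue strictly to the left of $\lambda_{\mathrm{PF}}$ in real part. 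To identify $\lambda_{\mathrm{PF}}=w_2$, I apply Lemma~\ref{sumrules} to $\mathbf{e}$: setting $S:=\sum_i e_i>0$ and $S_1:=\sum_i ie_i$, equation (\ref{2.23}) yields $\sum_i w_i e_i=\lambda_{\mathrm{PF}}S$, and substituting this into (\ref{2.24}) gives $\lambda_{\mathrm{PF}}(S_1-2S)=0$. The case $\lambda_{\mathrm{PF}}=0$ is ruled out, for it would force $\sum_i w_i e_i=0$, contradicting $e_1>0$ and $w_1=w_{1,2}>0$ from hypothesis~(i); hence $S_1=2S$, and substituting $w_i=ai+b$ in (\ref{2.23}) yields $\lambda_{\mathrm{PF}}S=aS_1+bS=(2a+b)S$, i.e.\ $\lambda_{\mathrm{PF}}=2a+b=w_2$.

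I expect the main obstacle to be the irreducibility step, which genuinely requires both hypotheses: hypothesis~(i) alone does not suffice because $(\mathbf{B})_{1,d}=d\,w_{1,d+1}=0$, so $1$ cannot directly reach $d$ and one needs hypothesis~(ii) to route through some intermediate $i_0\in\{2,\dots,d-1\}$. Once strong connectivity is secured, the Perron--Frobenius conclusions and the algebraic identification via Lemma~\ref{sumrules} follow without difficulty.
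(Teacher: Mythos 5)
Your proposal is correct and follows essentially the same route as the paper: shift $\mathbf{B}$ by a multiple of the identity to obtain a nonnegative matrix, apply Perron--Frobenius, and then identify the Perron eigenvalue as $w_2$ via Lemma~\ref{sumrules} and the linearity $w_i=ai+b$. The only differences are that you invoke the irreducible form of Perron--Frobenius and verify strong connectivity explicitly (which indeed suffices for the real-part claim), whereas the paper asserts primitivity of the shifted matrix, and you additionally rule out the degenerate case $\lambda_{\mathrm{PF}}=0$, a detail the paper leaves implicit.
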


\begin {proof}
We begin by choosing a number $\gamma > \max_{1\leq k \leq d} 
\left\{  w_k-kw_{k,2}\right\}$ and define $\mathbf{P} =
\mathbf{B}+\gamma \mathbf{I}$.  The matrix 
$\mathbf{P}$ has only non-negative entries and
the conditions (1) and (2) on $\mathbf{B}$ guarantee that it is
primitive, i.e. there is a number $k$ such that all entries of the
matrix $\mathbf{P}^k$ are positive. Therefore, by the
Perron-Frobenius theorem \cite{seneta},  
$\mathbf{P}$ has a simple positive eigenvalue $r$ and all other
eigenvalues of $\mathbf{P}$ have a smaller modulus. The corresponding 
eigenvector ${\mathbf e}_r$ can be taken to have all entries
positive. We normalize the eigenvector such that 
\beq{109}
\sum_{i=1}^d e_{ri} = 1.
\eeq

Shifting back to the matrix $\mathbf{B}$ we find that $w \equiv r-\gamma$ is a
simple real eigenvalue of $\mathbf{B}$ with the largest real part and the
corresponding eigenvector is ${\mathbf e}_w = {\mathbf e}_r$. We see 
right away from \rf{2.23} and with the chosen normalization that 
\begin {equation} \label{eq_w}
w = \sum_{i=1}^d w_i  e_{wi}.
\end {equation}
Since the weights are linear, Lemma \ref{sumrules} shows that $w = w_2$.
\end {proof}

Note that the first condition on the weights in the above lemma is natural since
we have fixed a maximal degree $d$ and therefore we want to be able to produce 
vertices of
degree $d$. The second condition, however, does not seem to be necessary for the
results to hold but we still require it in order to use the Perron-Frobenius
theorem for primitive matrices. This condition is not very restrictive
in the case of linear weights since it holds for all $a$ and $b$
except when $ad+b = 0$.

\begin{lemma} \label{limit} Let $\lambda\in {\mathbb C}$. Then

 \begin{equation}
\frac{t_0}{t} \prod_{i=t_0}^{t-1} 
\left(1  + \frac{1}{\mathcal{W}(i)}\lambda\right) \longrightarrow \left\{
 	\begin{array}{cll}
 	\frac{t_0w_2}{t_0w_2-2a} 
	& \text{$\mathrm{if}$ $\lambda = w_2$,} & \\[1em]
 	0 
	& \text{$\mathrm{if}$ $\mathrm{Re}(\lambda) < w_2$} &
	\end{array} 
	\right.
\end{equation}
\\
 as $t\to\infty$.
\end{lemma}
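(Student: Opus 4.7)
The plan is to write the product in closed form using the Gamma function and then read off the asymptotics with Stirling's formula. Recall from (2.21) that
\begin{equation*}
\mathcal{W}(i) = (2a+b)i - 2a = w_2\, i - 2a,
\end{equation*}
since for the linear splitting weights $w_i = ai+b$ one has $w_2 = 2a+b$. Hence each factor in the product rewrites as
\begin{equation*}
1 + \frac{\lambda}{\mathcal{W}(i)} = \frac{w_2 i - 2a + \lambda}{w_2 i - 2a} = \frac{i - \frac{2a}{w_2} + \frac{\lambda}{w_2}}{i - \frac{2a}{w_2}}.
\end{equation*}

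Setting $\beta = -2a/w_2$ and $\alpha = \beta + \lambda/w_2$, the identity $\prod_{i=t_0}^{t-1}(i+\alpha) = \Gamma(t+\alpha)/\Gamma(t_0+\alpha)$ yields
\begin{equation*}
\prod_{i=t_0}^{t-1}\left(1 + \frac{\lambda}{\mathcal{W}(i)}\right) = \frac{\Gamma(t_0+\beta)}{\Gamma(t_0+\alpha)} \cdot \frac{\Gamma(t+\alpha)}{\Gamma(t+\beta)}.
\end{equation*}
This is well defined provided $t_0 w_2 - 2a > 0$ (which must be assumed for $\mathcal{W}(t_0)>0$). Stirling's formula for the Gamma function gives the standard asymptotic
\begin{equation*}
\frac{\Gamma(t+\alpha)}{\Gamma(t+\beta)} = t^{\alpha-\beta}\bigl(1 + O(1/t)\bigr) = t^{\lambda/w_2}\bigl(1 + O(1/t)\bigr) \qquad (t\to\infty),
\end{equation*}
valid for any fixed complex $\alpha,\beta$. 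Multiplying by the prefactor $t_0/t$ gives
\begin{equation*}
\frac{t_0}{t}\prod_{i=t_0}^{t-1}\left(1 + \frac{\lambda}{\mathcal{W}(i)}\right) = \frac{t_0\,\Gamma(t_0+\beta)}{\Gamma(t_0+\alpha)}\, t^{\lambda/w_2 - 1}\bigl(1+o(1)\bigr).
\end{equation*}

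Now the two cases are immediate. If $\operatorname{Re}(\lambda) < w_2$ then the exponent $\operatorname{Re}(\lambda)/w_2 - 1 < 0$ so the right-hand side tends to $0$. If $\lambda = w_2$ then $\alpha = 1+\beta$, the exponent of $t$ vanishes, and $\Gamma(t_0+\alpha)/\Gamma(t_0+\beta) = t_0+\beta = t_0 - 2a/w_2$, giving the limit
\begin{equation*}
\frac{t_0}{t_0 + \beta} = \frac{t_0 w_2}{t_0 w_2 - 2a},
\end{equation*}
as stated. The only subtle point is the use of Stirling's asymptotic for complex arguments, but this is standard and follows e.g.\ from $\log\Gamma(t+\alpha) - \log\Gamma(t+\beta) = (\alpha-\beta)\log t + O(1/t)$; alternatively one can bypass the complex Gamma function entirely by taking logarithms and using $\log(1+\lambda/(w_2 i - 2a)) = \lambda/(w_2 i) + O(1/i^2)$ to get $\sum_{i=t_0}^{t-1} \log(1+\lambda/\mathcal{W}(i)) = (\lambda/w_2)\log(t/t_0) + C + o(1)$ with $C$ an absolutely convergent sum, yielding the same conclusion.
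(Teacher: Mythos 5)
Your proof is correct and follows essentially the same route as the paper: the paper's entire proof consists of stating the Gamma-function identity
$\prod_{i=t_0}^{t-1}\bigl(1+\lambda/\mathcal{W}(i)\bigr)
=\Gamma\bigl(t-\tfrac{2a-\lambda}{w_2}\bigr)\Gamma\bigl(t_0-\tfrac{2a}{w_2}\bigr)\big/\bigl(\Gamma\bigl(t-\tfrac{2a}{w_2}\bigr)\Gamma\bigl(t_0-\tfrac{2a-\lambda}{w_2}\bigr)\bigr)$,
which is exactly your rewriting with $\alpha=(\lambda-2a)/w_2$ and $\beta=-2a/w_2$. You simply make explicit the Stirling asymptotics $\Gamma(t+\alpha)/\Gamma(t+\beta)\sim t^{\alpha-\beta}$ and the evaluation at $\lambda=w_2$ that the paper leaves to the reader.
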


\begin {proof}
The result follows from the identity

\begin {eqnarray}
\frac{t_0}{t} \prod_{i=t_0}^{t-1} \left(1  +
\frac{1}{\mathcal{W}(i)}\lambda\right) &=& \frac{t_0}{t} \frac{\Gamma  \left( t-{\frac {2\,a-\lambda}{w_{{2}}}} \right) 
\Gamma  \left( t_{{0}}-\,{\frac {2a}{w_{{2}}}} \right)}{
  \Gamma  \left( t-\,{\frac {2a}{w_{{2}}}} \right)  \Gamma  \left( t_{{0}}-{\frac {2\,a-\lambda}{w_{{2}}}}
 \right) }.
\label{ppp}
\end {eqnarray}

\end {proof}

\begin {theorem}   
With the assumptions on $\mathbf{B}$ in Lemma \ref{perronfrob} and
the additional 
assumption that $\mathbf{B}$ is diagonalizable, the limit as
$t\to\infty$ of the right hand side
of equation (\ref{finite_n_solution}) exists and is given by the eigenvector
${\mathbf e}_{w_2}$ of $\mathbf{B}$ normalized such that
\begin {equation} \label{norm}
\sum_{i=1}^d  e_{w_2i} = 1.
\end {equation}
\end {theorem}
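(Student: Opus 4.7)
The plan is to diagonalize $\mathbf{B}$ and reduce the matrix product in \eqref{finite_n_solution} to a sum of independent scalar products along eigendirections, then apply Lemmas \ref{perronfrob} and \ref{limit} mode by mode. By the diagonalizability hypothesis, fix a basis $\{\mathbf{e}_\lambda\}_\lambda$ of (possibly complex) eigenvectors of $\mathbf{B}$, indexed by eigenvalues with multiplicity, and expand the initial density vector as $\boldsymbol{\rho}_{t_0} = \sum_\lambda c_\lambda\,\mathbf{e}_\lambda$. Since each factor $\mathbf{I}+\mathbf{B}/\mathcal{W}(i)$ acts on $\mathbf{e}_\lambda$ by multiplication by the scalar $1+\lambda/\mathcal{W}(i)$, equation \eqref{finite_n_solution} becomes
\[
\boldsymbol{\rho}_t \;=\; \sum_\lambda c_\lambda\left[\frac{t_0}{t}\prod_{i=t_0}^{t-1}\left(1+\frac{\lambda}{\mathcal{W}(i)}\right)\right]\mathbf{e}_\lambda.
\]

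Next I would apply Lemma \ref{limit} termwise to this finite sum. By Lemma \ref{perronfrob}, $w_2$ is a simple eigenvalue of $\mathbf{B}$ and every other eigenvalue $\lambda$ satisfies $\mathrm{Re}(\lambda)<w_2$; hence Lemma \ref{limit} kills every term with $\lambda\ne w_2$, while the $\lambda=w_2$ term contributes $c_{w_2}\cdot t_0 w_2/(t_0 w_2-2a)\,\mathbf{e}_{w_2}$. So
\[
\lim_{t\to\infty}\boldsymbol{\rho}_t \;=\; \alpha\,\mathbf{e}_{w_2}, \qquad \alpha \;=\; c_{w_2}\,\frac{t_0 w_2}{t_0 w_2-2a}.
\]

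To identify $\alpha$ without an explicit computation of $c_{w_2}$, I would invoke probability conservation. Every tree in $\mathcal{T}_t$ has exactly $t$ vertices, so $\sum_k n_k(T)=t$ and therefore $\sum_k \rho_{t,k}=1$ for all $t\ge t_0$. Passing to the limit in this finite-dimensional identity and using the normalization \eqref{norm} of $\mathbf{e}_{w_2}$, we get $\alpha=1$, proving the theorem. In particular, the limit is independent of the initial tree $T_0$.

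There is no deep obstacle here: the work has been done in Lemmas \ref{perronfrob} and \ref{limit}. The one subtlety is that Lemma \ref{limit} requires strict inequality $\mathrm{Re}(\lambda)<w_2$ rather than $|\lambda|<w_2$, but Perron–Frobenius gives $|\lambda|<w_2$ for subdominant eigenvalues and this implies $\mathrm{Re}(\lambda)\le|\lambda|<w_2$. As a sanity check, instead of probability conservation one could compute $c_{w_2}=\mathbf{f}_{w_2}^{\,T}\boldsymbol{\rho}_{t_0}$ directly via the left eigenvector; Lemma \ref{sumrules} applied to any eigenvector with $\lambda\notin\{0,w_2\}$ forces $\sum_k e_{\lambda,k}=0$ in the linear-weights case, and this would recover the same coefficient.
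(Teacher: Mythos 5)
Your proof is correct and follows the same skeleton as the paper's: expand $\boldsymbol{\rho}_{t_0}$ in the eigenbasis guaranteed by the diagonalizability hypothesis, and apply Lemma \ref{limit} mode by mode, with Lemma \ref{perronfrob} ensuring every eigenvalue other than $w_2$ has real part strictly less than $w_2$. The one place you genuinely diverge is in pinning down the coefficient of $\mathbf{e}_{w_2}$: the paper computes it explicitly as $(w_2 t_0-2a)/(w_2 t_0)$ by applying Lemma \ref{sumrules} and the constraints (\ref{con}) on the initial tree to the expansion of $\boldsymbol{\rho}_{t_0}$, whereas you infer $\alpha=1$ a posteriori from the conservation law $\sum_k\rho_{t,k}=1$ together with the normalization (\ref{norm}). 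Your route is arguably cleaner: the direct computation has to treat a possible zero eigenvalue separately (Lemma \ref{sumrules} forces $\sum_k e_{\lambda,k}=0$ only for $\lambda\notin\{0,w_2\}$), a subtlety your argument and your closing ``sanity check'' both gloss over but which probability conservation sidesteps entirely. One small inaccuracy in an aside: Perron--Frobenius is applied to the shifted matrix $\mathbf{P}=\mathbf{B}+\gamma\mathbf{I}$, so it gives $|\lambda+\gamma|<w_2+\gamma$ for subdominant eigenvalues, hence $\mathrm{Re}(\lambda)<w_2$ but \emph{not} $|\lambda|<w_2$ (an eigenvalue of $\mathbf{B}$ with large negative real part can have modulus exceeding $w_2$). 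This does not affect your proof, since Lemma \ref{perronfrob} already supplies the real-part statement you actually need.
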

\begin {proof}
We use the normalization in (\ref{norm}) and expand $\mbox{\boldmath $\rho$}_{t_0}$ in the basis of 
eigenvectors of $\mathbf{B}$. Using the results of Lemmas \ref{sumrules} and \ref{perronfrob} and that $T_0$ satisfies the equations in (\ref{con}) we see that the expansion is of the form

\begin {equation}
\mbox{\boldmath $\rho$}_{t_0} = \frac{w_2 t_0 - 2a}{w_2 t_0} \mathbf{e}_{w_2} + \sum_{i = 1}^{d-1} a_i \mathbf{e}_{\lambda_i}
\end {equation}
\\
where $\lambda_i, ~i=1,\ldots,d-1$ are the eigenvalues of $\mathbf{B}$ with real part less than $w_2$.  The result now follows from Lemma \ref{limit}.
\end {proof}

Theorem 2.4 shows that with the above conditions on $\mathbf{B}$ the limit of the
vertex degree densities exists, is independent of the initial tree
and is given by
\beq{qqq}
\mbox{\boldmath $\rho$} \equiv
\lim_{n\rightarrow \infty} \mbox{\boldmath $\rho$}_t = {\mathbf e}_{w_2}.
\eeq
The limiting densities 
are therefore the unique positive solution to equation (\ref{eigenvalueeq}), i.e. ~
\begin{equation} \label{exactsolution}
	\rho_k = 
	-\frac{w_k}{w_2} \rho_k 	
	+ \sum_{i=k-1}^{d}i \frac{w_{k,i+2-k}}{w_2} \rho_{i}.
\end{equation}
\\
\subsection {Explicit solutions} 
\label{s:explicitsolutions}
 We discuss three simple special cases.

1) When $d=3$ we find that 

\begin{equation}
	\mathbf{B} = {\begin {bmatrix}    
  0&2w_{1,3}&0\\[0.3em]
  w_{2,1}&w_{2,2}-2w_{3,1}&3w_{3,2}\\[0.3em]
  0&2w_{3,1}&0\\
	\end {bmatrix}}.  \label{B3}
\end{equation}
\\
If the weights satisfy the conditions in Lemma \ref{perronfrob} it is easy to see that $\mathbf{B}$ is diagonalizable. For linear
splitting weights $w_i = ai+b$ and uniform partitioning weights the positive solution 
of \rf{exactsolution} is
\beq{val}
\rho_1 = \rho_3 = \frac{2}{7} \quad {\rm and} \quad \rho_2 = \frac{3}{7}
\eeq 
for all
values of $a$ and $b$ as can easily bee seen from the simple structure
of the $\mathbf{B}$ in this case.

2) When $d = 4$, the splitting weights linear and the partitioning weights
uniform one can check that 

\begin{equation}
	\mathbf{B} = {\begin {bmatrix}    
  0&\frac{2}{3}(2a+b)&\frac{1}{2}(3a+b)&0\\[0.3em]
  a+b&-\frac{1}{3}(2a+b)&\frac{1}{2}(3a+b)&\frac{2}{3}(4a+b)\\[0.3em]
  0&\frac{2}{3}(2a+b)&-\frac{1}{2}(3a+b)&\frac{2}{3}(4a+b)\\[0.3em]
0&0&\frac{1}{2}(3a+b)&-\frac{1}{3}(4a+b)\\
  
	\end {bmatrix}}. \label{B4}
\end{equation}
\\
When $4a+b > 0$ the weights satisfy the conditions in Lemma \ref{perronfrob}. The eigenvalues of $\mathbf{B}$ are
$-\frac{1}{12}(33a+13b\pm\sqrt{a^2-78ab-15b^2})$, $w_2$ and 0. This shows that
$\mathbf{B}$ is diagonalizable except when $a/b = 39\pm16\sqrt{6}.$ One can
analyze these cases separately using a basis of generalized eigenvectors and show
that the right hand side of equation (\ref{finite_n_solution}) still converges
to ${\mathbf e}_{w_2}$.

3) Fix a maximal degree $d$. Choose partitioning weights 
\begin {align*}
w_{1,i} &= w_{i,1} = (i-1)^{-1}, \quad i=2,\ldots,d,  \\
w_{2,d} &= w_{d,2} = d^{-1}  
\end {align*}
and all other weights equal to zero. The splitting weights are then $w_i = 1$ for $i=1,\ldots,d$. Note that if we take the limit $d \rightarrow \infty$ we get a special case of the preferential attachment model.  These weights satisfy the conditions in Lemma \ref{perronfrob}. The nonzero matrix elements of $\mathbf{B}$ are
\begin {equation}
B_{i+1,i} = B_{1,i} =  -B_{i,i} = B_{2,1} = B_{2,d} = 1, \quad 1 < i < d. \\
\end {equation}
The characteristic polynomial of  $\mathbf{B}$ is 

\begin {equation}
p_d(\lambda) = (-1)^{d}\left(1-\lambda\right)\left(1-\left(1+\lambda\right)^{d-1}\right)
\end {equation}
\\
which can easily be proved by induction. The roots of the characteristic polynomial are $\lambda = 1$ and $\lambda = \exp{(\frac{2 \pi i k}{d-1})}-1$, $k=1,\ldots,d-1$ and they are all distinct which shows that $\mathbf{B}$ is diagonalizable. The solution to (\ref{exactsolution}) is
\begin{equation}
\rho_k = {\frac {{2}^{d-k+\delta_{kd}-1}-\delta_{k1}}{{2}^{d-1}-1}}, \quad k=1,\ldots,d.
\end{equation}

\subsection{Discussion of the conditions on the weights} 
\label{ssDiscWeight}
It is not obvious how restrictive the condition that $\mathbf{B}$ must be diagonalizable is regarding the collection of weights one can consider. In the previous subsection we saw that for $d=3$ and $d=4$ the condition was not very restrictive. Also we saw that for every $d$ there is at least one choice of weights which satisfies the conditions in Lemma \ref{perronfrob} and yields a diagonalizable matrix $\mathbf{B}$. We will now show that this guarantees that almost all weights give a diagonalizable $\mathbf{B}$.

Fix a maximal degree $d$. Let $\mathcal{B}_d$ be the set of matrices $\mathbf{B}$ which correspond to partitioning weights that give linear splitting weights and satisfy the conditions in Lemma \ref{perronfrob}. It is clear that if $\mathbf{B},\mathbf{B}'\in \mathcal{B}_d$ then \\ $t\mathbf{B}+(1-t)\mathbf{B}'\in\mathcal{B}_d$ for all $t \in [0,1]$ and so $\mathcal{B}_d$ is convex. Let 
$$\mathcal{B}'_d = \left\lbrace \mathbf{B}\in \mathcal{B}_d ~|~ \mathbf{B} \text{ is diagonalizable.}\right\rbrace.$$ 
From the previous subsection we know that $\mathcal{B}'_d \neq \emptyset$. Since $\mathcal{B}_d$ is convex and $\mathcal{B}'_d \neq \emptyset$ then by Corollary 1 in \cite{diagonal}, $\mathcal{B}'_d$ is dense in $\mathcal{B}_d$. 

We believe that it is possible to extend the result of convergence of the right hand side of (\ref{finite_n_solution}) to all partitioning weights giving  linear splitting weights, relaxing both the condition of diagonalizability of $\mathbf{B}$ and condition (2) in Lemma \ref{perronfrob}. We also believe, in view of simulations, that equation (\ref{exactsolution}) even describes
correctly the vertex degree distribution for
non-linear splitting weights and for the case $d=\infty$. 
In the special case of the preferential attachement model the vertex degree distribution can be calculated by another method \cite{Svante}.
We will look at this more closely in the next two subsections. 

\subsection{Mean field equation for general weights}
\label{ssMeanField}
To generalize equation (\ref{exactsolution}) beyond the case of linear splitting
weights we notice that Lemmas \ref{sumrules} and \ref{perronfrob} do
not rely on the linearity of the weights except in the conclusion of Lemma
\ref{perronfrob} where we show that $w = w_2$.
We therefore conjecture 
that in general the limiting vertex degree densities are the
unique positive solution to

\begin{equation} \label{mfequation}
	\rho_k = 
	-\frac{w_k}{w} \rho_k 	
	+ \sum_{i=k-1}^{d}i \frac{w_{k,i+2-k}}{w} \rho_{i},
\end{equation}

subject to the constraints 
\begin{equation} 
	\label{tme2} \rho_1+\ldots + \rho_d = 1 
\end{equation}
\begin{equation}
	\label{tme3} w_1\rho_1+\ldots + w_d \rho_d = w.
\end{equation}
Recall that $w$ is the unique simple positive eigenvalue of $\mathbf{B}$ defined in (\ref{thematrix}) with the largest real part of all the eigenvalues and $\rho_k$,  $k=1,\ldots,d$ are the components of the associated eigenvector with the proper normalization.

The existence and uniqueness of a positive solution to \rf{mfequation}
satisfying \rf{tme2} and \rf{tme3} follows from the
Perron-Frobenius argument in the proof of Lemma 2.2.  In order to 
distinguish \rf{mfequation} from (\ref{exactsolution}) we refer to it 
as the {\em{mean field equation for vertex degree densities}}. 
One can also arrive directly at this equation by assuming that for 
large $t$ an equilibrium with small enough fluctuations is established, 
and then performing the splitting procedure on this equilibrium.

The solution to the mean field equation for the $d=3$ model and uniform
partitioning weights is
\begin{eqnarray}
	\rho_3 &=& 
	\dfrac{7 \alpha-\sqrt {\alpha \left(\alpha+24\,\beta +24\right) }}{6(2\alpha-\beta-1)}
		\label{mfegw3}
\end{eqnarray}
where $\alpha = \dfrac{w_2}{w_1}$ and $\beta = \dfrac{w_3}{w_1}$. Note that from the constraints we have $\rho_1=\rho_3$ and 
$\rho_2=1-2\rho_3$. This solution (and solutions in general) only 
depends on the ratio of the weights. 
In Figure ~\ref{general:dinfinity} we compare the above
solution to simulations.

\begin{figure}[!h]
\centerline{\scalebox{0.8}{\includegraphics{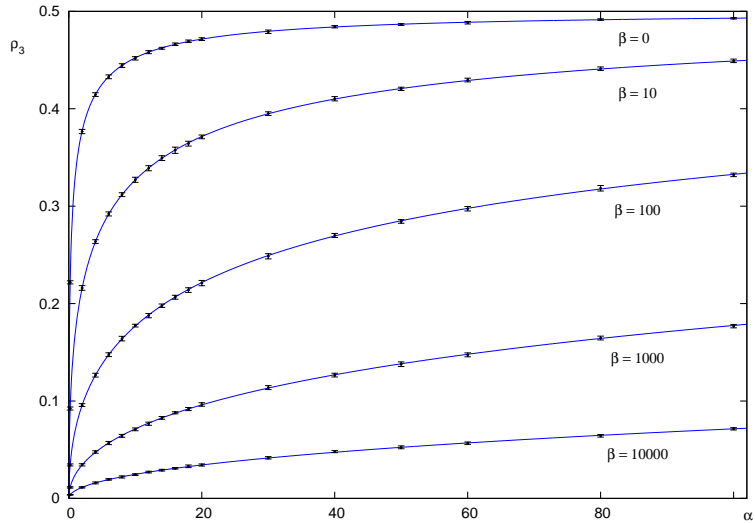}}}
\caption{ The value of $\rho_3$ as given in (\ref{mfegw3})
compared to results from simulations. Each point is calculated from 20 trees on
10000 vertices.} 
\label{general:dinfinity}
\end{figure}

\subsection{The $d_{\mathrm{max}}=\infty$ model with linear weights}
\label{ssdinfinite}

In this subsection we drop the assumption that there is an upper bound
on the vertex degrees but we still assume that all vertex degrees are
finite.
If we assume that equation (\ref{exactsolution}) holds for $d=\infty$,
then it is possible to find an exact solution in the case of linear splitting
weights, $w_i = ai+b$, and uniform partitioning weights. 
Equation (\ref{exactsolution}) becomes
\begin{equation} \label {dinf}
	\rho_k = 
	-\frac{w_k}{w_2} \rho_k + \sum_{i=k-1}^{\infty} \frac{2}{i+1} 
	\frac{w_{i}}{w_2} \rho_{i}.
\end{equation}
Subtracting from this the same equation for $\rho_{k+1}$ we find
\begin{equation}
	\rho_k \left(1+\frac{w_k}{w_2}\right) 
	- \rho_{k+1}\left(1+\frac{w_{k+1}}{w_2}\right) 
	= \frac{2}{k}\frac{w_{k-1}}{w_2}\rho_{k-1}.\label{diff}
\end{equation}
Let $x=b/a$.  The recursion \rf{diff} has the solution  
\begin{eqnarray*}
	\rho_k(x) &=& \left\{
 	\begin{array}{cll}
 	 \displaystyle \frac{2}{C(-1)}
	& \mbox{if } x = -1 \mbox{ and } k = 1\\[1em]
 	\displaystyle \frac{1}{C(x)}\frac{2^{k-1}\Gamma\left(k+x\right)}{\Gamma\left(k\right)
	\Gamma\left(k+3 +2x\right)} \left(k+1+2 x\right)
	&  \mbox{otherwise,}
	\end{array} 
	\right.
\end{eqnarray*}
where
\begin{equation}
C(x) =  \frac{e \sqrt{\pi } ~2^{-\frac{3}{2}-x}  
   I_{\frac{1}{2}+x}(1)}{2 + x}                   
\end{equation}
is a normalization constant such that 
$\sum_i\rho_i = 1$. Here, $I_\nu$ is the modified Bessel function of 
the first kind. The variable $x$ can take values from $-1$ to 
$\infty$. 
The asymptotic behaviour of $\rho_k(x)$ for large $k$ is
\begin {equation}
 \rho_k(x) = \frac{1}{C(x)} \frac{1}{k!}2^{k-1}k^
 {-1-x}\left(1+O\left(\frac{1}{k}\right)\right).\end {equation}
 
The special case $x=\infty$ corresponds to constant 
weights for which the solution is
\begin {equation}
\rho_k(\infty) = \frac{1}{e}\frac{1}{(k-1)!}.\label{exact}
\end {equation}
In Figure \ref{f:dinfinity} we compare the above solutions to simulations 
for five different values of $x$. 

\begin{figure}[!h]
\centerline{\scalebox{0.75}{\includegraphics{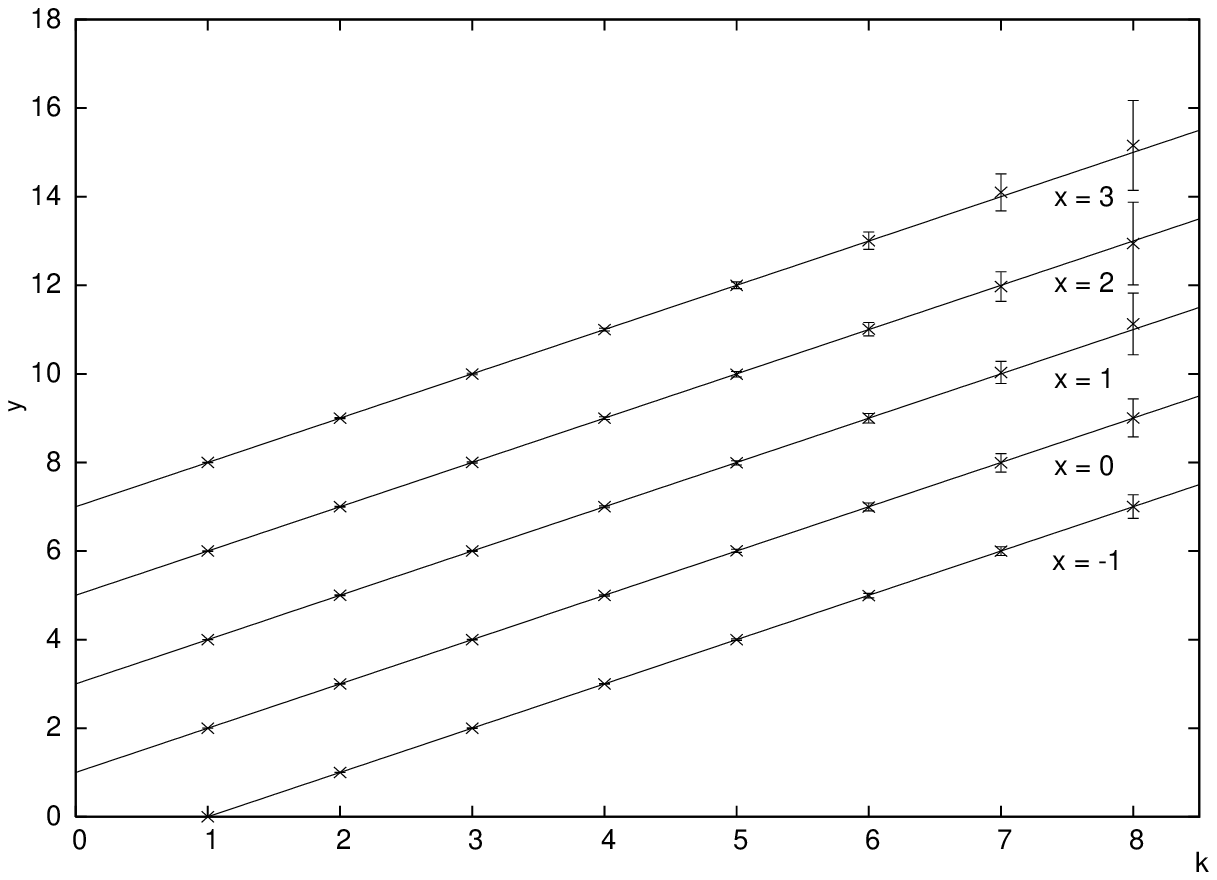}}}
\caption{The solid lines are $y = k+1+2x$ plotted against $k$ for five different 
values of $x$. The datapoints on the graph are calculated from simulations of 100 trees on $10^6$ vertices. For a given $k$ and $x$ they are calculated from the measured density $\rho_{k,\text{meas.}}(x)$ by 
$$
y = C(x)\frac{\Gamma\left(k\right)
	\Gamma\left(k+3 +2x\right)}{2^{k-1}\Gamma\left(k+x\right)}\rho_{k,\text{meas.}}(x)
$$
with an obvious modification if $x=-1$.} \label{f:dinfinity}
\end{figure}

\section{Subtree structure probabilities} 
\label{s:subtree}

In this section we consider the model in which vertices are labeled 
with their time of creation as explained in the definition of the
splitting process (item (iv)). For convenience we will take 
time to be the number of edges in a tree, denoted by $\ell$, starting from the single
vertex tree at time $0$ and adding one link and one vertex at each
time step.  We consider only linear splitting weights $w_i = ai+b$ but comment on
generalizations in the next section. To simplify the notation we define
\begin {equation}
 W(\ell) \equiv \mathcal{W}(T)-w_1 = (2a+b)\ell-a
\end {equation}
where the last equality follows from the linearity of the weights.

We derive exact expressions for probabilities of particular subtree
structures as seen from the vertex created at a given time.
By averaging over these probabilities and assuming the existence of 
a scaling limit,
we shall show how to extract the Hausdorff dimension of the tree and derive bounds on this dimension. In special 
cases we give an exact expression for the Hausdorff dimension.

\subsection {Probabilities related to subtree structure}
\label{ssSubTree}
Consider a tree on $\ell$ edges generated with the splitting procedure
starting from the single vertex tree at time $0$.
Let $p_R(\ell;s)$ be the probability that the vertex created at time $s$ 
is the root.  If $s<\ell$ we find that
\begin{equation}
 p_R(\ell;s) =
         \frac{1}{W(\ell-1)+w_1}W(\ell-1)p_R(\ell-1;s), \label{e_r1} 
\end{equation}
since we can split any vertex except the root in order to get from a
tree at time $\ell -1$ to a tree at time $\ell$. This contributes the factor $p_R(\ell-1;s)$ to $ p_R(\ell;s)$.  Similarly,
\begin{equation}
 p_R(\ell;\ell) =
         \frac{1}{W(\ell-1)+w_1}\sum_{s=0}^{\ell-1}w_1p_R(\ell-1;s),
	                 \label{e_r2}
\end{equation}
since if we create a new root vertex at time $\ell$ the previous root
vertex, labelled $s$ in \rf{e_r2} 
could have been created at any time before $\ell$.  We depict
these processes in Fig.\ \ref{f_r1}.
\begin{figure}[!h]
	\centerline{\scalebox{0.7}{\includegraphics{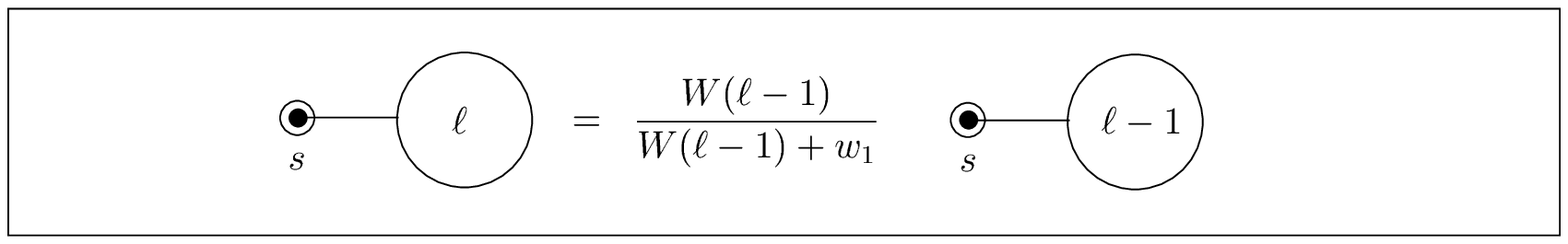}}}
	\centerline{\scalebox{0.7}{\includegraphics{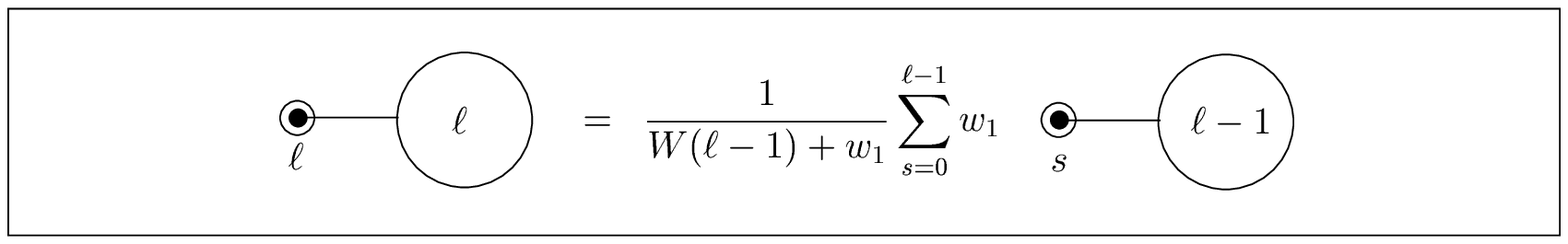}}}
\caption{Diagrams representing equations (\ref{e_r1}) 
and (\ref{e_r2}).} \label{f_r1}
\end{figure}

If $v$ is a vertex of order $k$ in a tree $T$, then there is a unique
link $\ell_1$ incident on $v$ leading towards the root (unless $v$ is the
root).  Let $\ell_2,\ldots ,\ell_k$ be the other links incident on $v$.
The largest subtree of $T$ which contains the root and $\ell_1$ but none
of the links $\ell_i$ with $i\geq 2$ will be called the {\it left subtree} 
(with respect to $v$).
The maximal subtrees which contain one $\ell_j$ with $j\neq 1$ and no
other link $\ell_i$ will be called the {\it right subtrees} (with respect
to $v$).   If $k=1$ then there are of course no right subtrees and if $v$
is the root then we view the left subtree as being empty.  Let $p_k(\ell_1,\ldots,\ell_k;s)$ denote the probability that the vertex 
created at time $s$ has a left subtree on $\ell_1$ edges and right subtrees 
on $\ell_2,\ldots,\ell_k$ edges, 
where $\ell_1 + \ldots + \ell_k = \ell$.
By the nature of the splitting operation, 
$p_k(\ell_1,\ell_2,\ldots,\ell_k;s)$ is symmetric under 
permutations of $(\ell_2,\ldots,\ell_k)$.  We will sometimes refer to
the vertex created at time $s$ as the $s$-vertex.

By the definition of the relabeling when we split we have
\begin{equation}
	p_1(\ell;\ell) = 0,   \label{e_r23}
\end{equation}
because the vertex closer to the root gets a new label and
therefore no leaf except the root 
can have the maximal label.  In the case $s<\ell$ we
find the recursion 
\bea
p_1(\ell;s) & = & \frac{1}{W(\ell-1)+w_1}\Big[W(\ell-1)
	p_1(\ell-1;s)  \nonumber \\ \nonumber
    	&& \hspace*{2ex} \;+\; \sum_{i=1}^{d-1}iw_{i+1,1}
	\sum_{\ell'_1+\ldots+\ell'_i = \ell-1}
	p_i(\ell'_1,\ldots,\ell'_i;s) 
	+ \delta_{\ell1}w_1\Big]. \\ \label{e_r3}
\eea
The first term in the square bracket 
corresponds to the case when we do not split the vertex
with label $s$. The second term corresponds to splitting the
$s$-vertex which can have any order up to $d-1$. Finally the last
term corresponds to the special case when we have $\ell =1$ so the
$s$-vertex is the root of the trivial tree, see Fig. \ref{f_r2}.

\begin{figure}[!h]
	\centerline{\scalebox{0.7}{\includegraphics{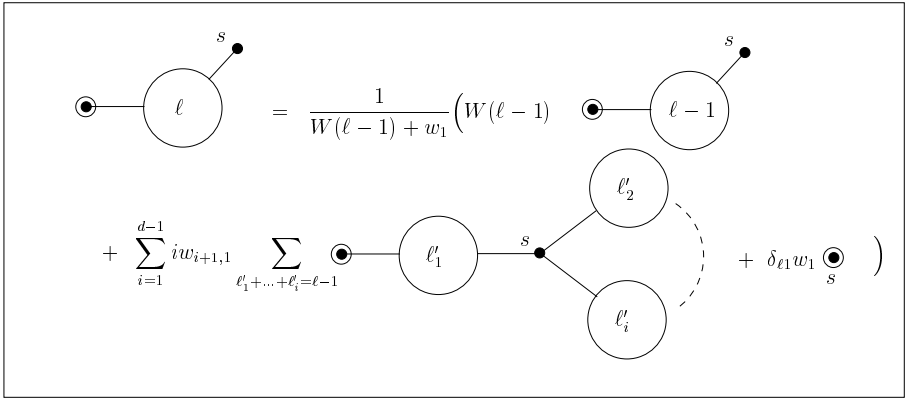}}}
\caption{A diagram representing equation (\ref{e_r3}).} \label{f_r2}
\end{figure}

For a general $k\geq 2$ and $s<\ell$ the recursion can be written
\begin{eqnarray} 
	\lefteqn{p_k(\ell_1,\ldots,\ell_k;s) \;\;=\;\; 
	\frac{1}{W(\ell-1)+w_1}\times }\nonumber \\ 
	&&\Big[\delta_{k2}\delta_{\ell_11}w_1p_R(\ell-1;s) 
	 + \sum_{i=1}^kW(\ell_i-1)
	p_k(\ell_1,\ldots,\ell_i-1,\ldots,\ell_k;s)  \label{e_r4} \\
	&&  + \sum_{i=k}^{d}\left(i+1-k\right)w_{k,i-k+2}
	\sum_{\ell'_1 + \ldots + \ell'_{i+1-k} = \ell_1-1} 
	p_i(\ell'_1,\ldots,\ell'_{i+1-k},\ell_2,\ldots,\ell_k;s) 
	\Big], \nonumber
\end{eqnarray}
see Fig.\ref{f_r3}. 
\begin{figure}[!h]
	\centerline{\scalebox{0.7}{\includegraphics{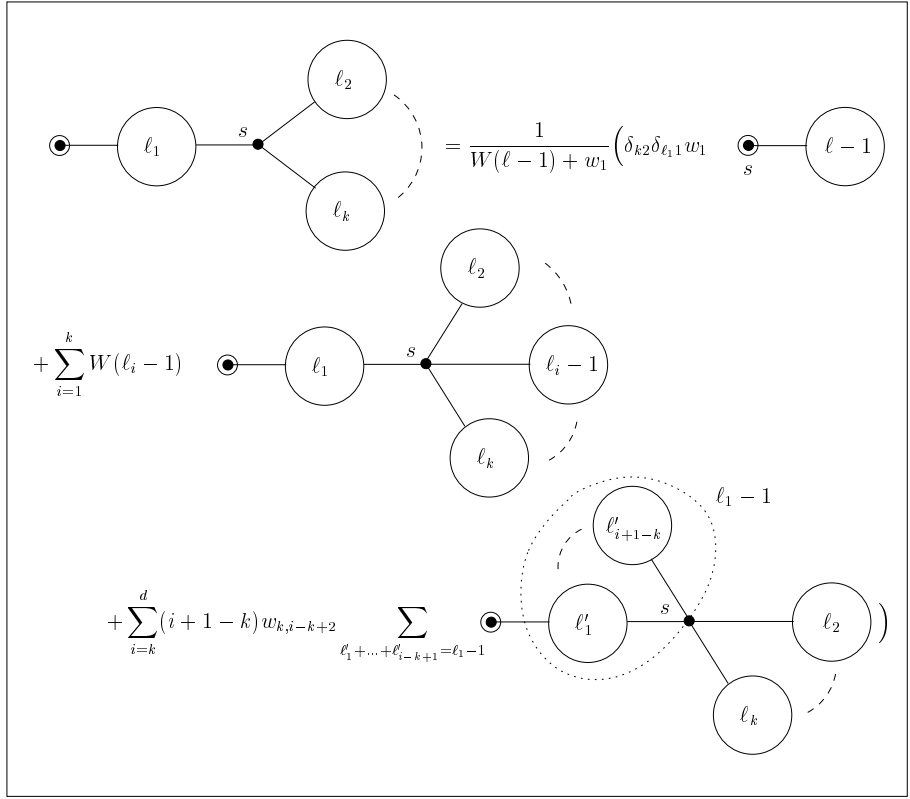}}}
\caption{A diagram representing equation (\ref{e_r4}).} \label{f_r3}
\end{figure}
The first term corresponds to the case when the
$s$-vertex is the root before the splitting in which case we have
$\ell_1=1$ and $k=2$.  The second term corresponds to the case when
we split a vertex different from the $s$-vertex and the last term
arises when we split the $s$-vertex in the step from time $\ell-1$ to
time $\ell$.  Finally we have 
\begin{eqnarray} \label{ee_r7}
       \lefteqn{p_k(\ell_1,\ldots,\ell_k;\ell) \;\;=\;\; 
	\dfrac{1}{W(\ell-1)+w_1} \times} \label{e_r5} \\ 
	&& \ds\sum_{s=0}^{\ell-1}\sum_{j=2}^{k}\sum_{i=k-1}^{d-1}
	\sum_{\ell'_1+\ldots + \ell'_{i+1-k} \atop = \ell_j-1} 
	w_{k,i-k+2} p_i(\ell_1,\ldots,\ell_{j-1},\ell'_1,\ldots,
	 \ell'_{i+1-k},\ell_{j+1},\ldots,\ell_k;s),  \nonumber
\end{eqnarray}
where $\ell_1 + \ldots + \ell_k = \ell$, see Fig.\ref{f_r4}.  
\begin{figure}[!h]
	\centerline{\scalebox{0.7}{\includegraphics{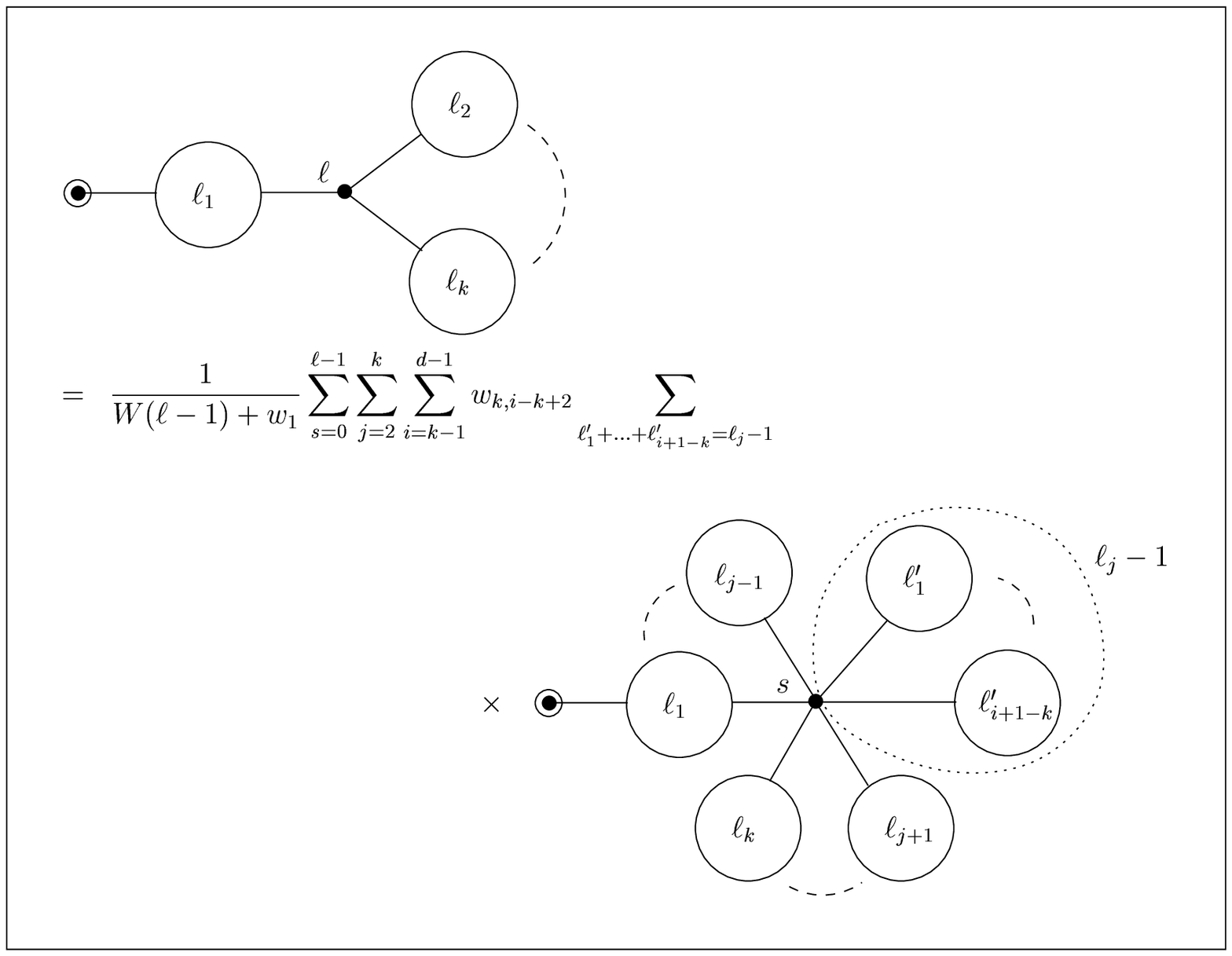}}}
\caption{A diagram representing equation (\ref{e_r5}).} \label{f_r4}
\end{figure}
Here $s$ is
the label of the vertex that is split in the step from time $\ell-1$
to time $\ell$ and we sum over all possible degrees of the $s$-vertex
and all ways of splitting it.

We define the following mean probabilities by averaging over the vertex labels in (\ref{e_r1}--\ref{ee_r7})
\begin{equation}
	 p_R(\ell) =\frac{1}{\ell+1}\sum_{s=0}^\ell p_R(\ell;s) 
	 \label{avroot}
\end{equation}
and
\begin{equation}
	 p_k(\ell_1,\ldots,\ell_k) = 
	\frac{1}{\ell+1}\sum_{s=0}^\ell p_k(\ell_1,\ldots,\ell_k;s), 
	\label{avpk}
\end{equation}
where $\ell_1 + \ldots + \ell_k = \ell$
From \rf{avroot} we get a recursion for the mean 
probabilities, going from time $\ell$ to $\ell+1$
\begin{eqnarray}
	 p_R(\ell+1) &=& \frac{\ell+1}{\ell+2}p_R(\ell)\label{recroot}. 
\end{eqnarray}
For $k=1$ we obtain from \rf{e_r23}, \rf{e_r3} and \rf{avpk}
\begin{eqnarray}
	\lefteqn{p_1(\ell+1)} \label{reck=1}\\ 
 	&=& 
	\frac{\ell+1}{\ell+2}\frac{1}{W(\ell)+w_1}\Big[W(\ell)p_1(\ell) 
	+\sum_{i=1}^{d-1}iw_{i+1,1}
	\sum_{\ell'_1+\ldots+\ell'_i \atop = \ell}
	p_i(\ell'_1,...,\ell'_i)+2\delta_{\ell0}w_1\Big].\nonumber 
\end{eqnarray}
Finally, the general case for $k\geq 2$ is
\begin{eqnarray}
	\lefteqn{p_k(\ell_1,\ldots,\ell_k)} \nonumber\\ 
	&=& \frac{\ell+1}{\ell+2}\frac{1}{W(\ell)+w_1}
	\Big[\delta_{k2}\delta_{\ell_11}w_1p_R(\ell) 
	+ \sum_{i=1}^kW(\ell_i-1)
	p_k(\ell_1,\ldots,\ell_i-1,\ldots,\ell_k) \nonumber\\
	&& + \sum_{i=k}^{d}\left(i-k+1\right)w_{k,i-k+2}
	\sum_{\ell'_1+\ldots+\ell'_{i+1-k} \atop = \ell_1-1}
	p_i(\ell'_1,\ldots,\ell'_{i+1-k},\ell_2,\ldots,\ell_k) 
	\label{reck}\\
	&&+ \sum_{j=2}^{k}\sum_{i=k-1}^{d}w_{k,i-k+2}
	\sum_{\ell'_1+\ldots+\ell'_{i+1-k}\atop = \ell_j-1}
	p_i(\ell_1,\ldots,\ell_{j-1},\ell'_1,\ldots,\ell'_{i+1-k},
		\ell_{j+1},\ldots,\ell_k)\Big] \nonumber
\end{eqnarray}
where $\ell_1 + \ldots + \ell_k = \ell+1$ and we have made use of 
\rf{e_r4}, \rf{e_r5} and \rf{avpk}.

\subsection{Two-point functions}
\label{ss2PtFunc}
One can reduce the above recursion formulas for the mean probabilities 
to simpler recursion formulas which suffice for the determination of
the Hausdorff dimension.   Define the {\it two-point functions}
\beq{twopoint}
	q_{ki}(\ell_1,\ell_2) 
	=	
	\sum_{\ell'_1 + \ldots + \ell'_{k-i} = \ell_1}
	\sum_{\ell''_1 + \ldots + \ell''_i = \ell_2} 
	p_k(\ell'_1,\ldots,\ell'_{k-i},\ell''_1,\ldots,\ell''_i),
\eeq
where $k=2,\ldots,d$ and $i = 1,\ldots,k-1$. 
In total there are $d(d-1)/2$ of these functions. 
If we define $$q_{1,0}(\ell_1,\ell_2) = \delta_{\ell_2 0}\delta_{\ell_1 \ell} p_1(\ell_1+\ell_2)$$ then $q_{ki}(\ell_1,\ell_2)$ is the probability that $i$ right trees of total volume $\ell_2$ are attached to a vertex of degree $k$ in a tree of total volume $\ell_1 + \ell_2$.
By summing over the equations in the previous section we get
\begin{eqnarray}
	q_{ki}(\ell_1,\ell_2)\nonumber
	&=& 
	\frac{\ell+1}{\ell+2}\; \frac{1}{W(\ell)+w_1}\Big[ \\
	&& \sum_{j=k-1}^d w_{k,j+2-k}
	\Big((j-i)q_{ji}(\ell_1-1,\ell_2)
	 +iq_{j,j-(k-i)}(\ell_1,\ell_2-1)\Big)\nonumber\\
	&& + \Big(W(\ell_1-1)+(k-i-1)(w_2-w_3)\Big)
		q_{ki}(\ell_1-1,\ell_2) \nonumber\\
	&& + \Big(W(\ell_2-1)+(i-1)(w_2-w_3)\Big)
		q_{ki}(\ell_1,\ell_2-1)\nonumber \\
	&& + \delta_{k2}\delta_{\ell_11}w_1p_R(\ell_2) 
	   + \delta_{i1}\delta_{\ell_21}w_{k,1}
	   \sum_{\ell'_1+\ldots+\ell'_{k-1} = \ell_1} 
	   p_{k-1}(\ell'_1,\ldots,\ell'_{k-1}) \Big] \nonumber \\ \label{2rec}
\end{eqnarray}
 with $\ell_1 + \ell_2 = \ell+1$.  We see that the two-point functions
 satisfy an essentially closed system of equations.  The last two terms in 
 \rf{2rec} do not contribute to 
 the scaling limit which will be discussed in the
 next section.

\section{Hausdorff dimension} 
\label{s:Hausdorff}

In this section we relate the two-point functions defined in the
previous section to the size of trees, defined in a suitable way.
With the help of some scaling assumptions this relation
allows us to calculate the Hausdorff dimension of the trees 
as a function of the partitioning weights in simple cases. As in Section \ref{s:subtree} we assume that the weights $w_i$ are linear in $i$ and we shall comment on the general case at the end of the section.

\subsection{Definition}
\label{ssDefHausdff}

Let $T$ be a tree with $\ell$ edges and $v$ and $w$ two vertices of
$T$. The (intrinsic geodesic) distance  $d_T(v,w)$ between $v$ and
$w$ is the number of edges that separate $v$ from $w$. We define the radius
of $T$ with respect to the vertex $v$ as
\begin{equation}
\label{Rv}
R_T(v)={1\over 2\ell}\sum_{w}d_T(v,w)\, k(w),
\end{equation}
where $k(w)$ is the degree of the vertex $w$. Notice that
$2\ell=\sum_w k(w)$.
The global radius of $T$ is
\begin{equation}
\label{Rglobal}
R_T={1\over 2\ell}\sum_{v}R_T(v) k(v).
\end{equation}
We define the Hausdorff dimension of the tree, $d_H$, from the scaling
law for large trees
\begin{equation}
\label{Rofell}
\langle R_T\rangle\ \sim\ \ell^{1/ d_H}\qquad{\ell\to\infty}.
\end{equation}
The more usual definition of the local fractal dimension of a vertex $v$ of the
tree, $d_f(v)$, is defined by the growth of the volume of a ball of radius $r$ around
$v$, $\mathcal{B}_v(r)=\{w:\,d(v,w)\le r\}$
\begin{equation}
\label{scalingdf}
\langle\mathrm{Card}(\mathcal{B}_v(r))\rangle\sim r^{d_f(v)}
\qquad  1\ll r\ll\ell.
\end{equation}
These two definitions are expected to coincide provided that the tree is a homogeneous fractal (no multifractal
behaviour).
We expect that the scaling behaviour \rf{scalingdf} 
is valid independently of
the point $v$ and also that the scaling (\ref{Rofell}) holds for $R_T(v)$
defined by (\ref{Rv}) irrespective of the point $v$ chosen.

\subsection{Geodesic distances and 2 point functions}
\label{ssGeodDist}
The radius $R_T(v)$ can be extracted from the two point functions
calculated in the previous section. Let $T$ be a tree and $v$ a vertex of $T$. 
Let $i$ be an edge of $T$. If we cut this edge
then the tree is split into two connected components,
a tree $T_1$ which contains $v$ and a tree $T_2$ that
does not contain $v$ (see Figure 8). 
Let $\ell_2(v;i)$ be the number of edges of $T_2$.
We have the simple remarkable result

\begin {equation} \label{RL}
\sum_w d_T(v,w) k(w) \;=\; \sum_i (2\ell_2(v;i)+1)
\end {equation}
\begin{figure}[h]
\begin{center}
\includegraphics[width=3in]{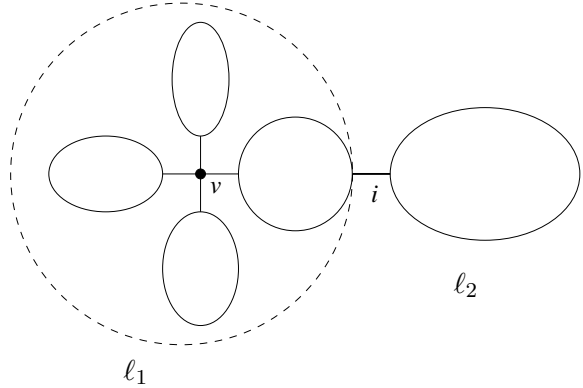}
\break
\hskip 3em\raisebox{0.em}{$\ell_1$}\hskip
1.6in\raisebox{7.ex}[0ex]{$\ell_2$}
\caption{Cutting a tree along the edge $i$.}
\label{f:treecut}
\end{center}
\end{figure}
which we will now prove. For the tree $T$ with $\ell$ edges, we may assign two labels to
every edge in the following way. 
Starting from $v$, we walk around the tree while 
always keeping the tree to the left.
Drop the labels 1 to $2\ell$ on the sides of edges as we pass them.

\begin{figure}[!h]
\includegraphics{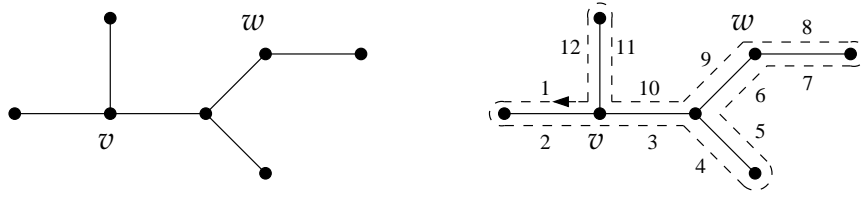}
\caption{A tree and its labels.}
\label{tree_label1}
\end{figure}

An example of such a walk and labelling is shown in Figure \ref{tree_label1}.
Let us mention that the initial direction from $v$ is unimportant.
In what follows we will denote these new labels by greek letters.

Given $1\leq \alpha <\beta\leq 2\ell$, define $\phi_v(\alpha,\beta)$ to 
be 1 if $\alpha$ and $\beta$ are labels of the same edge, and zero otherwise.
In the above example we have $\phi_v(6,9)=1$ whereas $\phi_v(6,12)=0$.
For any vertex $w \in T$, let us define $\omega(w)$ to be the smallest
label of the edges adjacent to $w$. In the example above $\omega(w)=6$
and $\omega(v)=1$.

We now have for any $w \in T$
\begin{eqnarray}\label{h&phi}
d_T(v,w) &=& \sum_{\alpha,\beta: \, \alpha\leq \omega(w) <\beta} \phi_v(\alpha,\beta)
\end{eqnarray}
and it is easy to see that
\begin{eqnarray} \nonumber
\sum_{w \in T} d_T(v,w) k(w)
  &=& 
  \sum_{\alpha,\beta,\gamma: \, \alpha\leq \gamma<\beta} \phi_v(\alpha,\beta) 
  \;=\;
  \sum_{\alpha,\beta: \, \alpha<\beta} \phi_v(\alpha,\beta) (\beta-\alpha).\\ \label{identity}
\end{eqnarray}

If $\phi(\alpha,\beta)=1$, i.e.\ if $\alpha$ and $\beta$
correspond to the two faces of the edge $a$, then
\begin{equation}
\label{q111}
\beta-\alpha= 2\,\ell_2(v,a)+1
\end{equation}
and equation (\ref{RL}) follows.

We now apply (\ref{RL}) by choosing for $v$ the root $r$ of the tree
and averaging over all trees obtained by the splitting process. We
notice that the average number of links giving the volume $\ell_2(r,i)$ is
simply the number of links $\ell$ times the proportion of vertices $w$
which have a left tree (containing the root $r$) on $\ell_1=\ell-\ell_2$
edges and an arbitrary number of right trees (on a total number of
$\ell_2$ edges). We obtain
\begin{equation}
\label{landqa}
\langle R_T(r)\rangle={\ell+1\over 2\ell}\sum_{\ell_2=0}^\ell(2\ell_2+1)
\sum_{k=1}^{d} q_{k,k-1}(\ell-\ell_2,\ell_2),
\end{equation}
see Fig.\ \ref{L&q} for a  simple diagrammatic representation of this
identity.
Thus, if we know how the two point functions $q_{k,i}(\ell_1,\ell_2)$
scale for large $\ell$, we know how the radius of the tree scales with
$\ell$ and we can compute the Hausdorff dimension $d_H$.
\begin{figure}
\begin{center}
\includegraphics[width=3in]{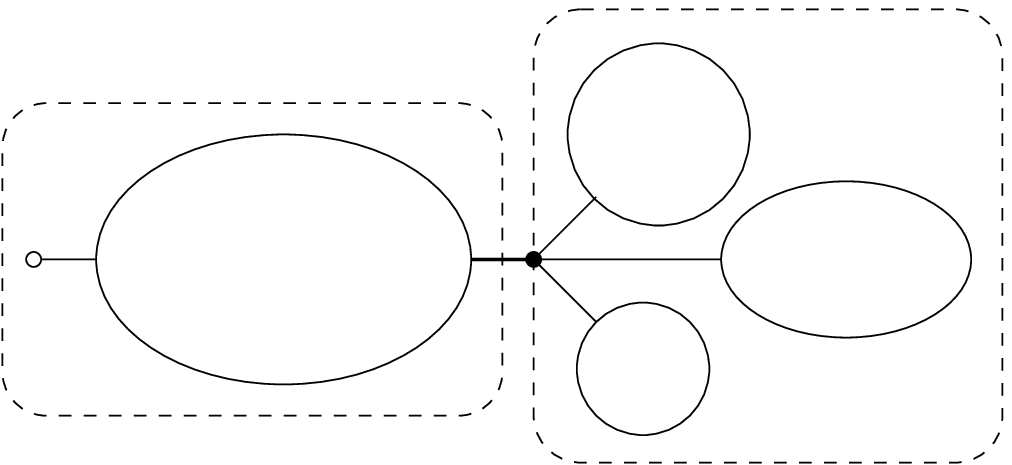}
\break
$\ell_1=\ell-\ell_2$\hskip1.2in$\ell_2$
\caption{Illustration of eq. \ref{landqa}.}
\label{L&q}
\end{center}
\end{figure}

\subsection{Scaling and the Hausdorff dimension}
\label{ScalingDF}

We assume that the following scaling holds for the two-point functions
$q_{ki}$ with $\ell$ large
\begin{eqnarray} \label{scalingassumption}
	q_{ki}(\ell_1,\ell_2) 
	&=& \ell^{-\rho}\left(\overline{\omega}_{ki}(x) + 
	\overline{\gamma}_{ki}(x)\ell^{-1}+O(\ell^{-2})\right)
\end{eqnarray}
where $\ell_1 + \ell_2 = \ell$, $x = \ell_1/\ell \in ]0,1[$ and where
$\overline{\omega}_{ki}, \overline{\gamma}_{ki}$ are some functions. It must hold that $\overline{\omega}_{ki} > 0$ and we assume that the scaling
exponent $\rho$ satisfies
\begin{equation}
\label{1rho2}
1< \rho\le 2 .
\end{equation}
Note that for $\ell$ finite, the probabilities
$q_{k,i}(\ell_1,\ell_2)$ are of order
$\ell^{-1}$ when
$\ell_1$ is of order 1 and are of
order $1$ when $\ell_2$ is of order $1$.
This implies that the scaling functions $\overline\omega_{ki}(x)$
should scale when $x\to 0$ or $x\to 1$,
respectively, as
\begin{equation}
\label{bbomega}
\overline\omega_{ki}(x)\mathop{\sim}_{x\to 0} x^{1-\rho}
\quad\text{and}\qquad
\overline\omega_{ki}(x)\mathop{\sim}_{x\to 1} (1-x)^{-\rho}.
\end{equation}
Using this ansatz and \rf{landqa} the mean radius scales as
\begin{equation}
\label{Rrscal}
\langle R_T(r)\rangle\simeq \ell^{2-\rho}\,C,\qquad
C=\int_0^1 dx\,
(1-x)\,\overline\omega(x),\qquad\overline\omega(x)=\sum_k \overline\omega_{k,k-1}(x).
\ \end{equation}
Equations (\ref{bbomega}) and (\ref{1rho2}) ensure that the integral $C$ is
convergent when $\rho <2$.  Equation
(\ref{Rofell}) then implies that the Hausdorff dimension of the tree is given
by
\begin{equation}
\label{rhodH}
2-\rho={1\over d_H}.
\end{equation}
For $\rho =2$ we see that $C$ is logarithmically divergent and this
corresponds to an infinite Hausdorff dimension.

Inserting (\ref{scalingassumption}) into the recursion equation (\ref{2rec}) for the two point functions and expanding in 
$\ell^{-1}$ gives (dropping the function argument $x$ in an obvious way)
\begin{eqnarray} \nonumber
	\lefteqn{
	\overline{\omega}_{ki} - \rho \overline{\omega}_{ki}\ell^{-1} 
	+ \overline{\gamma}_{ki}\ell^{-1} + O(\ell^{-2}) 
	} \\[0.5em] \nonumber
 	&=& \frac{1}{w_2}\ell^{-1}\Big(1-\frac{w_1+2w_2-w_3}{w_2}
	 \ell^{-1} + O(\ell^{-2})\Big) \\ \nonumber
 	&& \times \Big[\sum_{j=k-1}^d w_{k,j+2-k}\Big((j-i)
	 \overline{\omega}_{ji} +i\overline{\omega}_{j,j-(k-i)} 
	 + O(\ell^{-1})\Big) \\ \nonumber
 	&& + \ell\Big(w_2x+(-w_3+(k-i-1)(w_2-w_3))\ell^{-1}\Big)
	\Big(\overline{\omega}_{ki} + \overline{\gamma}_{ki}\ell^{-1} 
	+ O(\ell^{-2})\Big) \\
 	&& + \ell\Big(w_2(1-x)+i(w_2-w_3)\ell^{-1}\Big)
	 \Big(\overline{\omega}_{ki} + \overline{\gamma}_{ki}\ell^{-1} 
	 + O(\ell^{-2})\Big)\Big].
\end{eqnarray}
The equation is trivially satisfied in zeroth order of $\ell^{-1}$. 
When we go to the next order we see that the following must hold
\begin{eqnarray} \nonumber
	(2-\rho) \overline{\omega}_{ki} 
	&=& 
	\frac{1}{w_2}\sum_{j=k-1}^d w_{k,j+2-k}\Big((j-i)
	 \overline{\omega}_{ji} +i\overline{\omega}_{j,j-(k-i)}\Big) 
	 - \frac{w_k}{w_2}\overline{\omega}_{ki}. \\ \label{mainII}
\end{eqnarray}
This eigenvalue equation may be rewritten as 
\begin{eqnarray} \label{eighaus}
\mathbf{C}\mbox{\boldmath $\omega$}&=&w_2(2-\rho)\mbox{\boldmath $\omega$}
\end{eqnarray}
\\
where $\mathbf{C}$ is a \small{$\binom{d}{2} \times \binom{d}{2}$} \normalsize matrix indexed by a pair of two indices $ki$ with $k>i, ~k=2,\ldots,d$ and \mbox{\boldmath $\omega$} is 
a vector with two such indices. The matrix elements of $\mathbf{C}$ are 
\begin{eqnarray} \label{Ccoeff}
	C_{ki,jn} &=& 
	w_{k,j+2-k}\left(\left(j-i\right)\delta_{in} 
	 + i\delta_{n,j-(k-i)}\right) -w_k \delta_{kj}\delta_{in}.
\end{eqnarray}
We use the convention that $w_{i,j} = 0$ if $i$ or $j$ is less than 1 or greater than $d$.
Thus, $w_2(2-\rho)$ is an eigenvalue of the matrix $\mathbf{C}$ and the associated eigenvector must have components $\ge 0$. We now show that there is in general a unique solution to this eigenvalue problem. 

Since the only possibly negative elements of $\mathbf{C}$ are on the diagonal we can make the matrix non-negative by adding a positive multiple $\gamma$ of the identity to both sides of (\ref{eighaus}) and choosing $\gamma$ large enough.

If enough of the weights $w_{i,j}$ are non--zero ($w_{1,i} > 0$ for $2 \leq i \leq d$ and $w_{j,3} > 0$ for $2 \leq j \leq d-1$ is for example sufficient) then one can check that the matrix $\mathbf{C}+\gamma\mathbf{I}$ is primitive. Then, by the Perron-Frobenius theorem, it has a simple positive eigenvalue of largest modulus and its corresponding
eigenvector can be taken to have all entries positive cf. Lemma \ref{perronfrob}. 
Therefore this largest positive eigenvalue gives the $\rho$ we are after.

\subsection{An upper bound on the Hausdorff dimension}
\label{ssDFbound}

We can get an upper bound on $\rho$ by a straight forward estimate 
from (\ref{mainII}). The off-diagonal terms in the sum are all 
non-negative so we disregard them and get the inequality

\begin{eqnarray}
	\rho & \leq & 
	2-\left(k\frac{w_{k,2}}{w_{2}}-\frac{w_k}{w_2}\right), \quad k=2,\ldots,d.
\end{eqnarray}
Since $1 < \rho \leq 2$ and for $k \geq 3$
\begin{eqnarray}
	w_k &=& 
	k w_{k,2} + \frac{k}{2}
	\sum_{\substack{i=1\\i\neq 2,~i\neq k}}^{k+1}
		w_{i,k+2-i} \;>\; kw_{k,2}
\end{eqnarray}
the best we can get from this upper bound is when $k = 2$ which yields

\begin{eqnarray} \label{uprho}
 	\rho &\leq& 2- \frac{w_{2,2}-2w_{1,3}}{w_{2,2}+2w_{1,3}}.
\end{eqnarray}
Now, $2-\rho = \frac{1}{d_H}$ and therefore, if $w_{2,2} > 2w_{1,3}$, 
we obtain the upper bound

\begin{eqnarray} \label{uphaus}
	d_H &\leq&  \frac{w_{2,2}+2w_{1,3}}{w_{2,2}-2w_{1,3}}.
\end{eqnarray}
If $w_{2,2} \leq 2w_{1,3}$ the upper bound in (\ref{uprho}) gives no information about the Hausdorff dimension.

It is easy to verify that (\ref{uphaus}) is an equality if we choose the weights 
such that $w_{i,j} = 0$ if $i\neq 1$ or $j\neq 1$ with the exception 
that $w_{2,2} > 0$.  This condition means that we only allow vertices to 
evolve by link attachment, except that we can split vertices of degree 
2.  With this choice the matrix $\mathbf{C}$ is lower 
triangular and we can simply read the eigenvalues from the diagonal. Note that $\mathbf{C}$ is not primitive in this case and therefore we cannot use the Perron-Frobenius theorem to determine which eigenvalue gives the scaling exponent. However, with these simple weights one can show explicitly that there is precisely one eigenvector with strictly positive components and the corresponding eigenvalue is the one that saturates the inequality in (\ref{uphaus}).  

Also note that with this choice we have set $w_d=0$ and since the 
weights are linear, $w_i = ai+b$, we have fixed $a$ and $b$ so that 
$w_i = 1-\frac{i}{d}$. Therefore there is only one free parameter which we can choose to be $w_{2,2}$. Then we can write the Hausdorff dimension as
\begin {equation}
 d_H = \frac{1-2/d}{2w_{2,2}-(1-2/d)} 
\end {equation}
with
\begin {equation}
  \quad \frac{1}{2}(1-2/d) < w_{2,2} < 1-2/d.
\end {equation}
We see that for any $d$ the Hausdorff dimension can vary continuously from 1 to infinity.

\subsection{Explicit solutions and numerical results for $d_{\mathrm{max}}=3$.}
\label{ss=expd=3}
When the maximal degree is $d=3$, the splitting weights are taken to be linear \\ $w_i = ai+b$ and the partitioning weights uniform, it is easy to solve equation (\ref{mainII}) for the Hausdorff dimension . Since the solution only depends on the ratio of the weights there is only one independent variable and we choose it to be $y := w_3/w_2$ where $0 \leq y \leq 2$. The solution is  

\begin{equation} \label{specialHaus}
	d_H = \frac{3(1+\sqrt{1+16y})}{8y}
\end{equation}

In Figure \ref{f:hausdorff} we compare this equation to 
results from simulations. The agreement of the simulations with the formula is good in the tested range $0.5 \leq y \leq 2$. For smaller values of $y$ the Hausdorff dimension increases fast and one would have to simulate very large trees to see the scaling.

\begin{figure}[h]
\centerline{{\scalebox{0.8}{\includegraphics{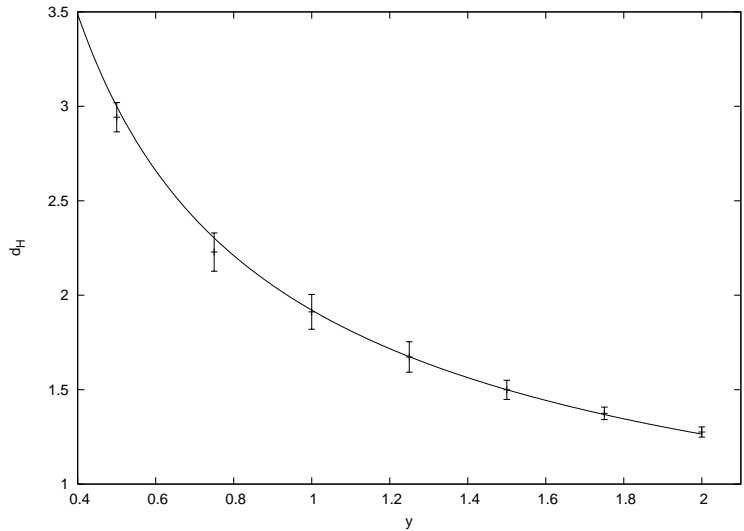}}}} 
\caption{Equation (\ref{specialHaus}) compared to simulations. 
The Hausdorff dimension, $d_H$, is plotted against $y = w_3/w_2$. The leftmost datapoint is calculated from 50 trees on 50000 vertices and the others are calculated from 50 trees on 10000 vertices.} \label{f:hausdorff}
\end{figure}

\subsection{Hausdorff dimension for general weights}
\label{ss:HausGen}
\subsubsection{General mean field argument}
\label{sss:HausMF}
Our argument to compute the Hausdorff dimension relies on the recursion relations for the substructure probabilities, studied in Section~3, which are valid only when the splitting weights $w_i$ are linear functions of the vertex degree $i$ ($w_i=ai+b$). In this case the total probability weight $\mathcal{W}(T)$ for a given tree $T$ depends only on its size $\ell$ (number of edges) and mean field arguments can be made exact.

In the general case where the $w_{i,j}$ are arbitrary and the $w_i$ are not linear with $i$, these recursion relations are no more exact.
We can use a mean field argument and assume that they are still valid for large ``typical'' trees, provided that we replace in these recursion relations the exact weights $W(\ell)+w_1=\mathcal{W}(T)$ by their mean field value for large trees 
\begin{equation}
\label{W2Wover}
W(\ell)+w_1\ \to\ \overline{\mathcal{W}}(\ell)=\sum_j w_j \overline{n}_{\ell+1, j}
\end{equation}
where $\overline{n}_{\ell+1, j}$ is the average number of $i$-vertices in a tree with $\ell$ edges, studied in Section~2. 
From the mean field analysis of Section~2.5, we expect that  these $\overline{n}_{\ell+1, j}$ scale with $\ell$ as
\begin{equation}
\label{W2WMF}
\overline{n}_{\ell+1, j}\simeq\ell\rho_j
\end{equation}
with the vertex densities $\rho_j$ given by the mean field equations (\ref{mfequation},\ref{tme2},\ref{tme3}) as the components of the eigenvector $\boldsymbol{\rho}$ associated to the largest eigenvalue $w$ of the matrix $\mathbf{B}$.
Thus the mean field approximation amounts to replacing 
\begin{equation}
\label{Wwell}
W(\ell)+w_1\ \to\ \overline{\mathcal{W}}(\ell)=w\, \ell+\cdots
\end{equation}
in the recursion relations of Section~3, in particular in the recursion relation (\ref{2rec}) for the two point function $q_{ki}$.

With this assumption, we can repeat the scaling argument of Section~4.3, and we end up with equation (4.16), with the normalisation factor ${1\over w_2}$ in the r.h.s. replaced by the mean field normalisation factor ${1\over w}$
\begin{equation} \label{main}
	(2-\rho)\, \overline{\omega}_{ki} 
	= 
	\frac{1}{w}\sum_{j=k-1}^d w_{k,j+2-k}\Big((j-i)
	 \overline{\omega}_{ji} +i\overline{\omega}_{j,j-(k-i)}\Big) 
	 - \frac{w_k}{w}\overline{\omega}_{ki}
	 \ .
\end{equation}
This equation is still an eigenvalue equation of the form
\begin{equation}
\label{evmain}
\mathbf{C}\,\boldsymbol{\omega} = w(2-\rho)\,\boldsymbol{\omega}
\end{equation}
where $\mathbf{C}$ is the $\binom{d}{2}\times\binom{d}{2}$ matrix with coefficients given in (\ref{Ccoeff}).

If we denote by $\chi$ the largest eigenvalue of this matrix $\mathbf{C}$ and if $w$ is the largest eigenvalue of $\mathbf{B}$
then the Perron-Frobenius argument can be applied to show that $\chi$ is non-negative and that the eigenvector $\boldsymbol{\omega}$ has non-negative components, which is a consistency requirement for the argument, since the $\omega_{ki}$ are rescaled probabilities.
We end up with a mean field prediction for the Hausdorff dimension of the simple form
\begin{equation}
\label{dHmf}
d_H={1\over 2-\rho}={w\over\chi}.
\end{equation}

\subsubsection{General solution for $d=3$}
\label{sss:dHd=3}
In the $d=3$ case (trees with only $k=1,2$ and $3$ vertices), the $\mathbf{B}$ and $\mathbf{C}$ matrices are 
\\
\begin{eqnarray}
\label{ B&Cd=3}
\mathbf{B}=\left[\begin{matrix}
     0 &  2 w_{3,1} & 0  \\
      w_{2,1}& w_{2,2}-2\,w_{3,1} & 3 \,w_{3,2}\\
    0   & 2\,w_{3,1} & 0
\end{matrix}\right],
\hskip0.1in
\mathbf{C}=\left[\begin{matrix}
     w_{2,2}-2\,w_{3,1} &  2 w_{3,2} & w_{3,2}  \\
      w_{3,1}& 0 & 0\\
    2\,w_{3,1}   & 0 & 0
\end{matrix}\right] \nonumber \\ \nonumber\\
\end{eqnarray}
and we find
\\
\begin{equation}
\label{dHd=3}
d_H=\frac{(w_{2,2}-2 w_{3,1})+\sqrt{(w_{2,2}-2 w_{3,1})^2+8 w_{3,1}(w_{2,1}+3 w_{3,2})}}{(w_{2,2}-2 w_{3,1})+\sqrt{(w_{2,2}-2 w_{3,1})^2+16 w_{3,1}w_{3,2}}}.
\end{equation}
\\
\subsubsection{Numerical test of the mean field prediction}\
\label{sss:dHtest}
We have tested our prediction when $d=3$ and the partitioning weights $w_{i,j}$ are uniform. In this case the \ref{dHd=3} becomes
\begin{equation} \label{generalHaus}
d_H = {\frac {\alpha-\sqrt {\alpha\, \left( \alpha+24+24\,\beta \right) }}{
\alpha-\sqrt {\alpha\, \left( \alpha+16\,\beta \right) }}}
\end{equation}
\\
where $\alpha = \dfrac{w_2}{w_1}$ and $\beta = \dfrac{w_3}{w_1}$. 
In Figure \ref{f:genhausdorff} we compare this equation to 
results from simulations. There is a good agreement for small values of $\alpha$ and not too small values of $\beta$, but the precision of the numerics becomes poor for large values of $\alpha$ and small values of $\beta$. This could be  expected since in this case, the trees will have a large Hausdorff dimension and finite size effects are expected to be large, so that one must go to very large trees to see the scaling. Clearly a better estimate of finite size effects and large simulations are desirable.
\begin{figure}[h]
\centerline{{\scalebox{0.8}{\includegraphics{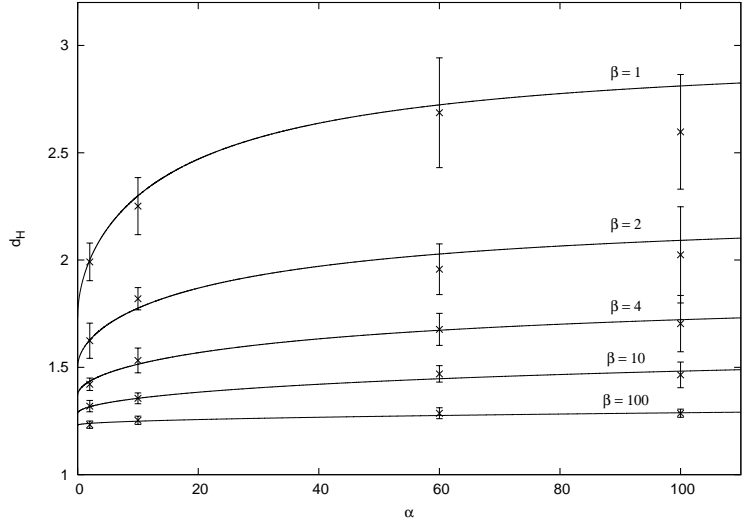}}}} 
\caption{Equation (\ref{generalHaus}) compared to simulations. 
Each datapoint is calculated from 50 trees on 10000 vertices.} \label{f:genhausdorff}
\end{figure}

\section{Correlations} \label{s:correlations}

\subsection{Vertex-vertex correlations in the linear case}
\label{ssVVlinear}

Consider a tree on $\ell$ edges generated by the splitting 
procedure starting from the single vertex tree at time 0. We are interested in determining the density of edges which have endpoints of degrees $j$ and $k$ in the limit when $\ell \rightarrow \infty$. Denote this density by $\rho_{jk}$, assuming it exists, and for convenience we let the vertex of degree $j$ be the one closer to the root. Therefore $\rho_{1k}=0$ for all $k$ and in general $\rho_{jk} \neq \rho_{kj}$. 

To arrive at these densities we use the same labelling techniques as in Ssection \ref{s:subtree}. We start by defining 
$$p_{jk}(\ell'_1,\ldots,\ell'_{j-1};\ell''_1,\ldots,\ell''_{k-1};s)$$ 
as the probability that a vertex created at time $s$ is of degree 
$k$ and has $\ell''_1,\ldots,\ell''_{k-1}$ right trees and that the 
vertex left to it is of degree $j$ with an $\ell'_1$ left tree and 
$\ell'_2,\ldots,\ell'_{j-1}$ right trees 
(excluding the right tree which contains $s$). Note that it is symmetric under permutations of both $(\ell'_2,\ldots,\ell'_{j-1})$ 
and $(\ell''_1,\ldots,\ell''_{k-1})$ and
\begin{eqnarray*}
\ell'_1+\ldots+\ell'_{j-1}+\ell''_1 + \ldots + \ell''_{k-1} &=& \ell-1. 
\end{eqnarray*}

We use the methods of Section \ref{s:subtree} to derive recursion equations for 
\\ $p_{jk}(\ell'_1,\ldots,\ell'_{j-1};\ell''_1,\ldots,\ell''_{k-1};s)$ for linear splitting weights. 
We then average over the label $s$ as before and get recursions for the 
average probabilities 
$p_{jk}(\ell'_1,\ldots,\ell'_{j-1};\ell''_1,\ldots,\ell''_{k-1})$. 
All equations and diagrams are given in Appendix \ref{A:corr}. Finally 
we define the densities $\rho_{jk}(\ell)$ by averaging 
out the volume dependence of the average probabilities
\begin{eqnarray*}
	\rho_{jk}(\ell) &=& 
	\sum_{\ell'_1+\ldots\ell'_{j-1}+\ell''_1+\ldots+\ell''_{k-1}
	=\ell-1}
	p_{jk}(\ell'_1,\ldots,\ell'_{j-1};\ell''_1,\ldots,\ell''_{k-1})
\end{eqnarray*}
and similarly we denote the vertex degree density by
\begin{eqnarray*}
	\rho_j(\ell) \equiv \rho_{\ell,j} &=& 
	\sum_{\ell_1+\ldots+\ell_j=\ell} p_j(\ell_1,\ldots,\ell_j),
\end{eqnarray*}
cf. Section \ref{s:genfunc}.
We have the following recursion for the densities
\begin{eqnarray*}
	\lefteqn{\rho_{jk}(\ell+1) \;\;=}\\ 
	&&\frac{\ell+1}{\ell+2}\frac{1}{W(\ell)+w_1}
	\bigg\{(\ell w_2-w_j-w_k+2w_1-w_2)\rho_{jk}(\ell) 
		+ (j-1)w_{j,k}\rho_{j+k-2}(\ell) \\
	&&+ \; (j-1)\sum_{i=j-1}^d w_{j,i+2-j}\rho_{ik}(\ell) 
		+ (k-1)\sum_{i=k-1}^dw_{k,i+2-k}\rho_{ji}(\ell) \\
	&&+ \; \delta_{j1}\delta_{\ell''_1,\ell-1}w_1p_R(\ell)
		+\delta_{j1}\delta_{\ell0}w_1\bigg\}
\end{eqnarray*}
for $i,j \geq 1$, see Appendix \ref{A:corr} . Now assume that
$\rho_{jk}(\ell) = \rho_{jk} + r_{jk}\ell^{-1} + O(\ell^{-2})$ 
and that a similar expansion holds for $\rho_j(\ell)$. 
Expanding the above recursion in $\ell^{-1}$ gives
\begin{eqnarray*}
\lefteqn{\rho_{jk}+r_{jk}\ell^{-1} + O(\ell^{-2}) \;\;=\;\;
\left(1-\frac{w_1+2w_2-w_3}{w_2}\ell^{-1}+O(\ell^{-2})\right) 
\bigg\{ } \\
&& \left(1+\frac{-w_j-w_k+2w_1-w_2}{w_2}\ell^{-1}\right)
	\left(\rho_{jk} + r_{jk}\ell^{-1}+O(\ell^{-2})\right) \\
&&+ \frac{\ell^{-1}}{w_2}
	\bigg[(j-1)w_{j,k}\big(\rho_{j+k-2}+O(\ell^{-1})\big)
\;+\; (j-1)\sum_{i=j-1}^d w_{j,i+2-j}
	\left(\rho_{ik}+O(\ell^{-1})\right) \\ 
&&+ (k-1)\sum_{i=k-1}^dw_{k,i+2-k}\left(\rho_{ji}+O(\ell^{-1})\right) 
	\bigg]\bigg\}.
\end{eqnarray*}
This equation is trivially satisfied in zeroth order of $\ell^{-1}$. 
When we go to the next order we get the following equation for the limits of the densities
\begin{eqnarray} 
	\rho_{jk} &=& 
	-\frac{w_j+w_k}{w_2} \rho_{jk} + (j-1) \frac{w_{j,k}}{w_2}
	\rho_{j+k-2}+(j-1)\sum_{i=j-1}^{d}
		\frac{w_{j,i+2-j}}{w_2}\rho_{ik} \nonumber\\ \label{correquation}
	&& + (k-1)\sum_{i=k-1}^d\frac{w_{k,i+2-k}}{w_2}\rho_{ji}. 
		\label{corrsol}
\end{eqnarray}

We can also arrive directly at this equation by assuming that for large 
$\ell$ an equilibrium with small enough fluctuations is established, 
and then perform the splitting procedure on this equilibrium. With the same methods, it is possible to derive an equation like (\ref{correquation}) for the density $\rho_{j_1,j_2,\ldots,j_R}$ of linear paths of length $R-1$ directed towards the root containing vertices of degrees $j_1,\ldots,j_R$.

Existence of solutions to equation (\ref{correquation}) can be established by the Perron-Frobenius argument like in the previous sections. In the following subsections we will find an explicit solution for linear weights and discuss generalizations for non-linear weights. In both cases we compare the results with simulations.

\subsection {Solution in the simplest case}
\label{ssVVd=3}
When $d=3$, the splitting weights are linear and the partitioning weights are uniform, we know that 
$\rho_1 = \rho_3 = 2/7$ and $\rho_2 = 3/7$, see Section \ref{s:explicitsolutions}.
Let $y=w_3/w_2$. Then the solutions to equation (\ref{corrsol})  are

\begin {equation} \label{corrsolutions}
\begin{array}{rcl@{\qquad}rcl}
\rho_{21} &=&\dfrac{4(3-y)}{7(11-2y)}, &
\rho_{31} &=& \dfrac{10}{7(11-2y)},\\[1.5em]
\rho_{22} &=& \dfrac{4y^2-12y+105}{7(2y+7)(11-2y)}, &
\rho_{32} &=& \dfrac{2(-8y^2+18y+63)}{7(2y+7)(11-2y)},\\[1.5em]
\rho_{23} &=&\dfrac{2(-4y^2+20y+21)}{7(2y+7)(11-2y)}, &
\rho_{33} &=& \dfrac{8(3y-14)}{7(2y+7)(2y-11)}.\\[1.5em]
\end{array}
\end {equation}

Note that the following sum rules hold for the solutions
\begin {equation} \label{sumrulescorr}
\begin{array}{rclcl}
\rho_{21} + \rho_{31} &=& \rho_1 &=& {2}/{7} \\
\rho_{22} + \rho_{32} &=& \rho_2 &=& {3}/{7} \\
\rho_{23} + \rho_{33} &=& \rho_3 &=& {2}/{7}, \\
\rho_{21} + \rho_{22} + \rho_{23} &=& \rho_2 &=& {3}/{7} \\
\rho_{31} + \rho_{32} + \rho_{33} &=& 2\rho_3 &=& {4}/{7}.
\end{array}
\end {equation}
These relations show that there are only two independent link densities. We plot $\rho_{21}$ and $\rho_{22}$ in Figure \ref{f_sol2} and 
compare to simulations.

\begin{figure}[h]
\centerline{\scalebox{0.8}{\includegraphics{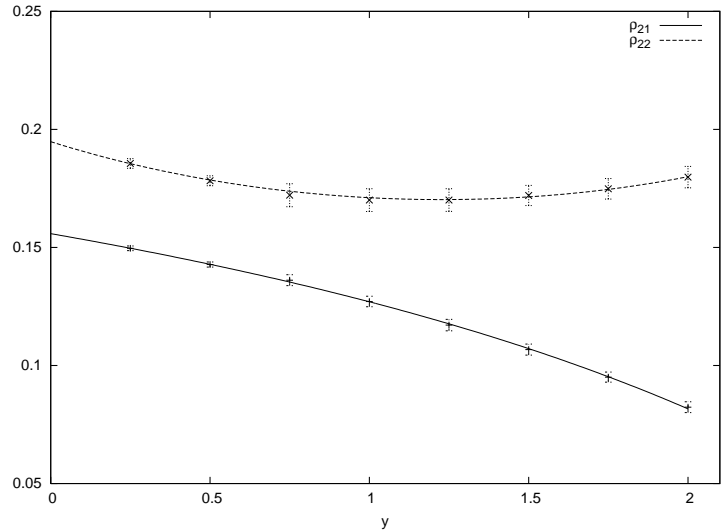}}} 
\caption{
Two independent solutions given in (\ref{corrsolutions}) plotted against $y = w_3/w_2$ and compared to simulations. The two leftmost datapoints on each line come from simulations of 50 trees on 50000 vertices. The other datapoints come from simulations of 50 trees on 10000 vertices.} \label{f_sol2}
\end{figure}

We can compare the above results to the case when no correlations are present. 
Denote the uncorrelated densities by $\tilde{\rho}_{ij}$. 
Then we simply have 
\begin{eqnarray*}
\tilde{\rho}_{ij} &=& \frac{\rho_i \rho_j}{1-\rho_1}.
\end{eqnarray*}
\\
The denominator comes from the fact that the vertex closer to the root 
is of degree one with probability zero. Inserting the values from (\ref{sumrulescorr}) into this equation 
gives

$$\begin{array}{rcl@{\qquad}rcl}
\tilde{\rho}_{21} &=& {6}/{35}, & \tilde{\rho}_{31} &=& {4}/{35}\\
\tilde{\rho}_{22} &=& {9}/{35}, & \tilde{\rho}_{32} &=& {6}/{35}\\
\tilde{\rho}_{23} &=& {6}/{35}, & \tilde{\rho}_{33} &=& {4}/{35}
\end{array}$$
\\
showing that in general $\rho_{ij} \neq \tilde{\rho}_{ij}$ and so correlations are present between degrees of vertices.

\subsection{Results for non-linear weights}
\label{VVnonlinear}
We can generalize equation (\ref{corrsol}) to a mean field equation, valid for 
arbitrary weights, by replacing $w_2$, where it appears in a denominator, with $w$ as we did with the equation for vertex degree densities in
Section \ref{S:genfunc}. For $d=3$ and uniform partitioning weights the two independent densities $\rho_{21}$ and $\rho_{22}$ are given by

\begin {equation}
\label{e:r21}
\rho_{21}=\frac{1}{3}\,{\frac { \left( 3+\beta \right)  \left( 7\,\alpha-\gamma \right) }{ \left(2
\,\alpha-\beta -1\right)  \left( 3\,\alpha+2\,\beta+\gamma +6\right) }}
\end {equation}
\\
\scriptsize
\begin {eqnarray*} \nonumber
\text{{\normalsize $\rho_{22}$}} &\text{{\normalsize$=$}}& \frac{16}{3}\Big(284\,{{\alpha}^{2}\beta}^{4}\gamma-177\,{\alpha}^{5}\beta\gamma+3564\,{\alpha}^{3}+18\,{\alpha}^{6}\gamma +161\alpha\,{\beta}^{5}\gamma-873\,\gamma+11979\,{\alpha}^{2}{\beta}^{3} \\ \nonumber
&&- 2259\,{\alpha}^{5}-39\,{\alpha}^{6}\beta-207\,{\alpha}^{5}\gamma+6516\,{\alpha}^{2}{\beta}^{4}-5205\,{\alpha}^{5}\beta-1419\,{\alpha}^{4}\beta\gamma+996\,\alpha{\beta}^{5}\\ \nonumber
&& - 5994\,{\alpha}^{4}-892\,{\alpha}^{4}{\beta}^{2}\gamma+1543\,{\alpha}^{2}{\beta}^{5}-18\,{\alpha}^{7}-668\,{\alpha}^{3}{\beta}^{4}+324\,{\alpha}^{2}\gamma+909\,{\alpha\beta}^{3}\gamma\\ \nonumber
&&-2600\,{\alpha}^{5}{\beta}^{2}-975\,{\alpha}^{3}{\beta}^{3}+222\,\alpha{\beta}^{6}-1533\,{\alpha}^{3}{\beta}^{2}\gamma+10206\,{\alpha}^{2}{\beta}^{2}-11799\,{\alpha}^{4}\beta\\ \nonumber
&&-5300\,{\alpha}^{4}{\beta}^{3}-1521\,{\alpha}^{3}\beta\gamma+1899\,{\alpha}^{2}{\beta}^{2}\gamma+1059{\alpha}^{2}\,{\beta}^{3}\gamma+1269\,{\alpha}^{3}{\beta}^{2}+3240{\alpha}^{2}\,\beta \\ \nonumber
&&+756\,\alpha{\beta}^{3}+4860\,{\alpha}^{3}\beta+6\,{\beta}^{6}\gamma-11703\,{\alpha}^{4}{\beta}^{2}+1728{\alpha}^{2}\,\beta\gamma-162\,{\alpha}^{3}\gamma+486\alpha\,{\beta}^{2}\gamma\\ \nonumber
&&+18\,{\beta}^{4}\gamma+1530\,\alpha{\beta}^{4}+
624\alpha\,{\beta}^{4}\gamma-772\,{\alpha}^{3}{\beta}^{3}\gamma-9\,{\alpha}^{6}+24\,{\beta}^{5}\gamma\Big)\Big/\Big(\left( 3\,\alpha+2\,\beta+\gamma+6
 \right) \\ \nonumber
 &&\times\left( 11\,{\alpha}^{2}+25\,\alpha\beta+5\,\alpha\gamma+3\,\beta\gamma+12\,\alpha+4\,{\beta}^{2} \right) \left( -\alpha+\gamma \right)  \left( 1-2\,\alpha+\beta \right)    \left( 7\,\alpha+2\,\beta+\gamma \right) ^{2}
\Big)  \\ 
\end {eqnarray*}
\normalsize
\begin {equation}  \label{e:r22}
~
\end {equation}
\\
\normalsize
where $\alpha = \dfrac{w_2}{w_1}$, $\beta = \dfrac{w_3}{w_1}$ and $\gamma=\sqrt {\alpha \left(\alpha+24\,\beta +24\right) }$.  These solutions are compared to simulations in Figures \ref{f_rho21} and \ref{f_rho22}. The other densities are obtained by using the sum rules (\ref{sumrulescorr}).

\begin{figure}[!h]
\centerline{\scalebox{0.9}{\includegraphics{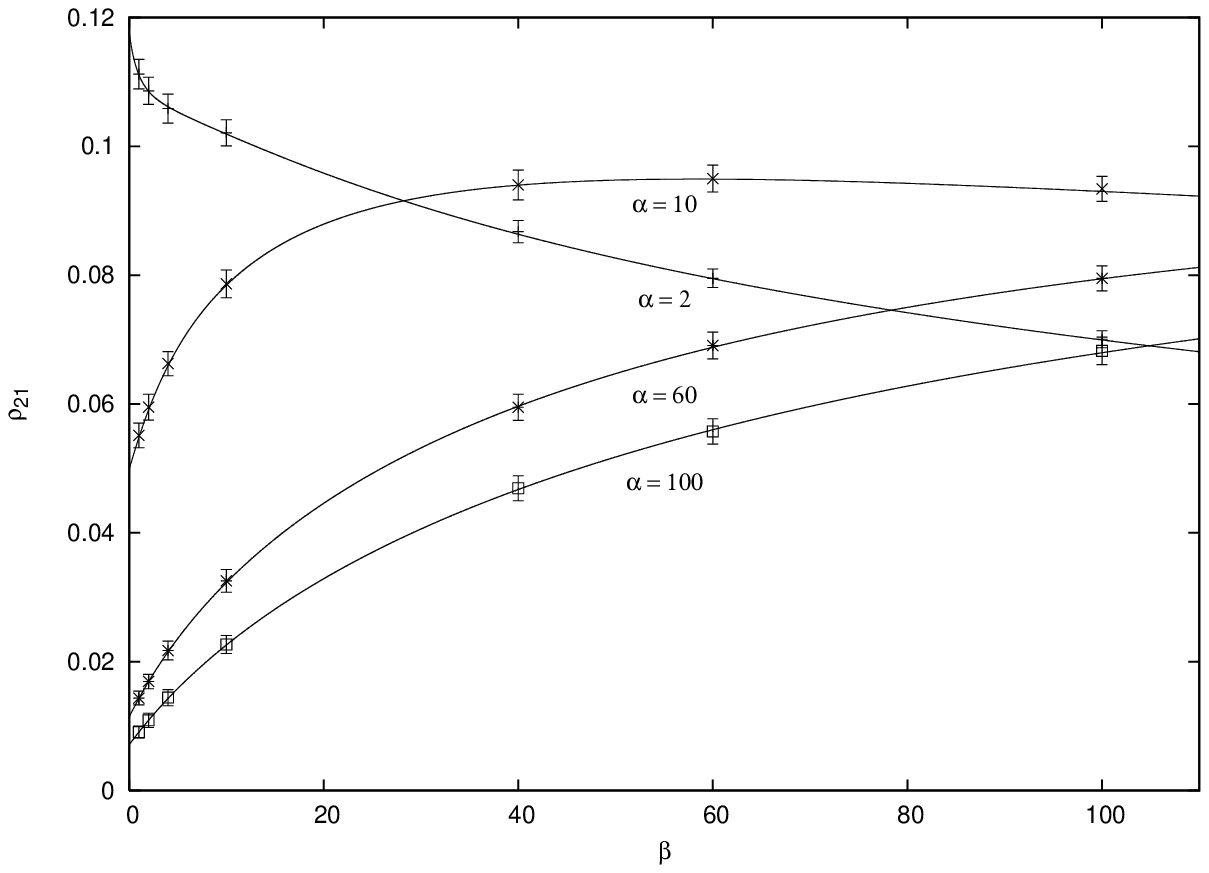}}} 
\caption{
The  solution (\ref{e:r21}) for the density $\rho_{21}$ plotted as a function of $\beta$ for a few values of $\alpha$. Each datapoint is calculated from simulations of 100 trees on 10000 vertices.} \label{f_rho21}
\end{figure}
\begin{figure}[!h]
\centerline{\scalebox{0.9}{\includegraphics{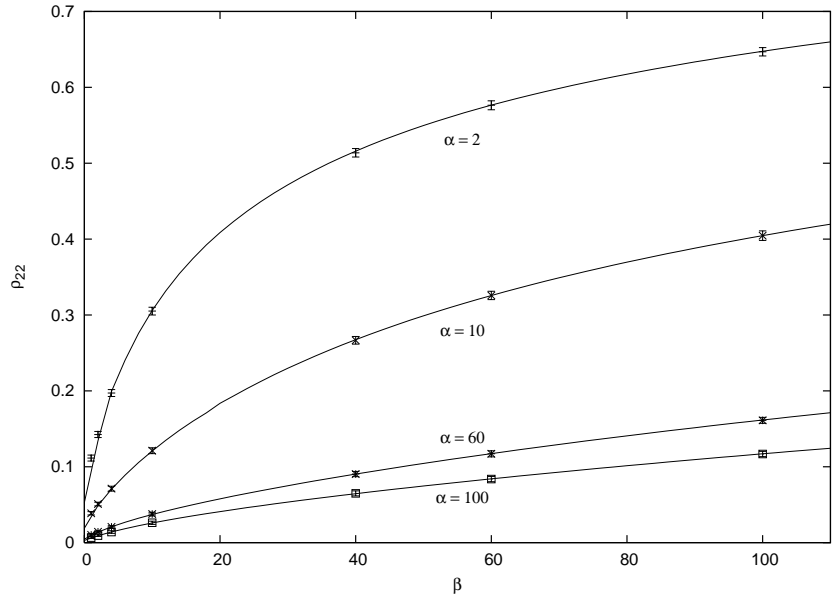}}} 
\caption{
The solution  (\ref{e:r22}) for the density $\rho_{22}$ plotted as a function of $\beta$ for a few values of $\alpha$. Each datapoint is calculated from simulations of 100 trees on 10000 vertices.} \label{f_rho22}
\end{figure}

\section{Discussion}
\label{s:Disc}
In this paper we introduced a simple but general model of random tree growth by vertex splitting. 
The probability weights $w_{i,j}$ for each splitting process can be chosen independently and are the parameters of the model.
We study the large time/large tree size asymptotics of this model.
A mean field theory is presented for the vertex degree distributions and we deduce master equations for various correlations functions, like subtree structure probabilities and vertex degree correlations. 
The mean field results are proved in some special cases and checked by numerical simulations in many general cases.
Under a natural scaling hypothesis we are able to compute the intrinsic Hausdorff dimension $d_H$ of the tree.
This Hausdorff dimension may take values between $d_H=1$ and $d_H=\infty$, depending on the parameters $w_{i,j}$.
These predictions for $d_H$ are checked by numerical simulations.

There are still many open questions for the splitting model presented here. 
The most urgent one is to prove rigorously that the mean field theory is valid in the general case where the splitting weights $w_i$ are not linear with the vertex degrees $i$. 
Another goal is to get a better understanding of the degree-degree correlations for vertices which are at a finite distance greater than one, and of the general substructure probabilities. 
It would also be very interesting to obtain general distribution functions for correlations involving the geodesic distances on the trees, and more than two points.

\medskip
As already mentioned in the introduction, the model considered here is related to other tree growth models. 
Firstly, it contains as a very special subclass the so called preferential attachment models, which have been extensively studied in the literature (see e.g.~\cite{MASS_DIST,AlbertBara}).
This model  consists in allowing only growth by attaching  
a new vertex by an edge to an already existing $k$-vertex,
with an attachment rate $w_k$ depending only on the degree $k$ of the
initial vertex.  In our model this corresponds to allowing only the
splitting processes
$$k\to 1,k+1\ ,$$
and to set all the splitting weights to zero except the 
$$w_{k+1,1}=w_{1,k+1}={w_k\over k}\ . $$
The preferential attachment model is not covered by our Lemma~\ref{perronfrob} (since condition (2) is not satisfied). 
But the preferential attachement model, in particular the vertex degree distribution $\rho_k$ can be studied by different methods (see e.g. \cite{Svante}).
It can also be recovered as a limiting case of our model, by letting the $w_{i,j}\to 0$ if $i$ or $j>1$ in our solution for the $\rho_k$, and eventually taking also as maximal degree $d=d_{\mathrm{max}}=\infty$.
In this limit we also recover for the Hausdorff dimension
$$d_H\to\infty$$
as expected.

\medskip
Secondly, it also contains as a special limit case the so called alpha-model introduced by D. Ford \cite{alphamodel}, in connection with phylogenetic trees and cladograms. This model is related (but is in general not equivalent) to the so called beta-models introduced by Aldous
\cite{AldousBeta,AldousClad}. The alpha-model has been studied in \cite{HassMiermontPitmanWinkel,PitmanWinkel} and generalised into the alpha-gamma model in \cite{ChenFordWinkel} (a mixture of the alpha-model and of the preferential attachment model). It is beyond the scope of this paper to discuss in detail the relationship between these different models, as well as the motivations for them. However, we discuss briefly the basic idea.

The alpha model is a growth model for rooted binary trees, with only 1--vertices (the leaves) and 3--vertices (internal vertices), so that $n_1=n_3+1$ (excluding the root). The growth process grafts a new leaf to a given edge, with probability weight $\alpha$ if the edge is internal (connects two internal vertices) and weight $1-\alpha$ if the edge is external (connects a leaf to an internal vertex).
Thus at each step a leaf and an internal vertex are created.

This leaf-edge attachment process can be easily reproduced in our vertex-splitting model by considering the general model with $d=3$ (we allow only 1--, 2-- and 3--vertices) and by taking the weight $w_{3,1}\to +\infty$. Indeed, in this case, if we start from a binary tree $T$ with no 2--vertices, any splitting will produce a tree $T'$ with a 2--vertex, either by splitting a leaf into a 1--vertex and a 2--vertex (with probability weight $w_{2,1}$), or by splitting an internal vertex into a 3--vertex and a 2--vertex (with probability weight $w_{3,2}$).
But since $w_{3,1}=+\infty$, at the next step, with probability $1$, the new 2--vertex will split into a 1--vertex and a 3--vertex, thus producing a new binary tree where a new 1--vertex has been grafted on an edge. 
These two  two-step processes are depicted in Fig.~\ref{2stepsplit}.
\begin{figure}[!h]
\begin{center}
\includegraphics[width=4in]{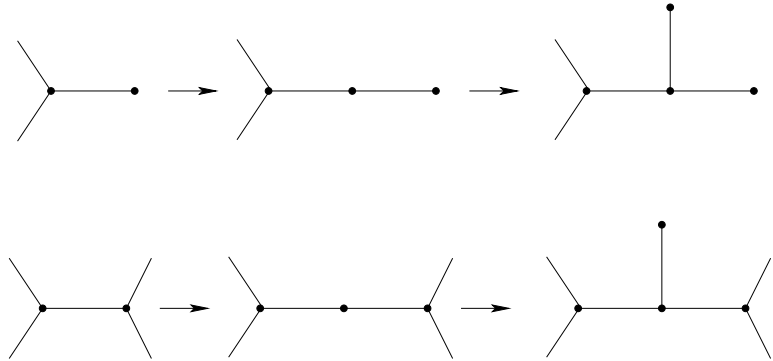}
\caption{The two step processes in the $d=3$ model which reproduce the grafting processes of the $\alpha$-model.}
\label{2stepsplit}
\end{center}
\end{figure}
For large binary trees, the density of 1--vertices and 3--vertices are equal ($\rho_1=\rho_3=1/2$, $\rho_2=0$), as well as the density of internal and external edges ($\rho_{3,1}=\rho_{3,3}=1/2$).
It is easy to see that our model for $d=3$ and $w_{3,1}=+\infty$ should be equivalent to the $\alpha$-model with the choice
\begin{equation}
\label{alphaw32w21}
\alpha={2 w_{3,2}\over w_{2,1}+3\,w_{3,2}}
\quad.
\end{equation}
As expected, $w_{2,2}$ is irrelevant in the limit $w_{3,1}\to+\infty$.
This equivalence can be checked explicitly. For instance it is proven in \cite{alphamodel} that the so-called Sackin's index $S(t)$ scales with the size $t$ of the binary tree as $S(t)\sim t^{1+\alpha}$. Here $S(t)$ is defined as the sum over the leaves $v$ of the number of edges between $v$ and the root (minus 1). It is therefore related to the radius of the tree, and we expect it to scale as $t^{1+1/d_H}$, where $d_H$ is the Hausdorff dimension of the tree. Therefore we expect that 
\begin{equation}
\label{alphadH}
\alpha=1/d_H
\end{equation}
in the $\alpha$-model.
This is indeed the case, taking the limit $w_{3,1}\to +\infty$ in the general formula (\ref{dHd=3}) for $d_H$ in the $d=3$ model, we obtain
\begin{equation}
\label{limd_H}
\lim_{w_{3,1}\to\infty} d_H={w_{2,1}+3w_{3,2}\over 2\,w_{3,2}}
\quad.
\end{equation}

The alpha-gamma model of \cite{ChenFordWinkel} is a model of trees which encompasses both growth by grafting of leaves on existing branches (like in the alpha model) and grafting leaves on existing nodes (like in the attachement models). It is very easy to see that the model presented here contains also  the alpha-gamma model of \cite{ChenFordWinkel}, but we shall not elaborate further on this point.

\medskip
Another class of models are the tree growth models considered in \cite{francoisetal}, which are equivalent to arch deposition models studied in connection with random bond RNA folding problems (in the greedy approximation scheme). These last models are slightly more complex, since they involve a two step splitting process, and their dynamics (i.e. the splitting weights) is fixed by the geometry of RNA folding (the splitting weights cannot be choosen arbitrarily). Variants of these models exhibit universal properties: the fractal dimension is always found to be $d_H=(\sqrt{17}+3)/4$ and some scaling functions are universal. It would be interesting to study correlations for these models with the methods presented here.

Both classes of models (the alpha and alpha-gamma models and the RNA models) are related to fragmentation trees and to coalescent trees, that is trees which appear in continuous time processes. It would also be interesting to study our model by continuous time methods. 
Another important question is universality: we have seen that our model, like the models of \cite{alphamodel} and \cite{AldousBeta,AldousClad}  gives trees with continuously varying scaling exponents. In particular we can obtain $d_H=1$, $d_H=2$ and $d_H=\infty$. We do not know, however, whether the trees that we obtain in this case have the same scaling limit (scaling exponents and correlation functions) as the trees obtained in \cite{alphamodel} for $\alpha=1$, $1/2$ an $0$ respectively (the Comb, the Uniform and the Yule models of phylogenetic trees) or if our model gives a much richer universality class structure.
Finally the trees studied here are symmetric, in the sense that the ``time ordering'' of the vertices, or their distance from the root, does not seem to be crucial, since the growth rule by splitting is local and geometric. It should be interesting to understand the similarities and the difference with the directed trees considered in (for instance) directed polymers and evolving population problems (see e.g. \cite{BrunetDerridaSimon2008} and reference therein).

\subsection*{Acknowledgements}
This research is supported in part by the French Agence Nationale de la Recherche
program GIMP ANR-05-BLAN-0029-01 (F.D.), the ENRAGE European network,
MRTN-CT-2004-5616, the Programme d'Action Int\'egr\'ee J. Verne ``Physical applications of random graph theory'',
the University of Iceland Research Fund and the Eimskip Research Fund at the University of Iceland.
F.D. thanks G. Miermont and N. Curien for their interest and useful discussions.

\appendix 
\section{Illustration of recursion equations for correlations} 
\label{A:corr}
To make the notation more compact we will write for $i\leq j$
$$ \ell_{i,j} = \ell_i,\ldots,\ell_j, \quad \text{and} \quad |\ell_{i,j}| = \ell_i+\ldots+\ell_j.$$
We can write the following recursions for going from time $\ell-1$ to 
time $\ell$. Note that $s<\ell$ in (\ref{e_r6}), (\ref{e_r8}) and 
(\ref{e_r9}).
{\small{
\begin{eqnarray} 
\lefteqn{p_{1k}(\ell''_{1,k-1};s) \;\;=  } \nonumber \\
&& \frac{1}{W(\ell-1)+w_1}\Big(\sum_{i=1}^{k-1} W(\ell''_i-1)
	p_{1k}(\ell''_{1,i-1},\ell''_i-1,\ell''_{i+1,k-1};s) \nonumber
	\\  && \hskip2in + \;\; \delta_{k2}w_1p_R(\ell-1;s) 
	+ \delta_{\ell1}\delta_{k1}w_1\Big). \nonumber\\
	\label{e_r6} &&\\
\lefteqn{p_{1k}(\ell''_{1,k-1};\ell)\;\;=\;\;}\nonumber\\
&& \dfrac{1}{W(\ell-1)+w_1}\sum_{s=0}^{\ell-1}
	\sum_{i=k-1}^d(k-1)w_{k,i+2-k} 
	\sum_{|\tilde{\ell}_{1,i+1-k}|=\ell''_1-1}
p_{1i}(\tilde{\ell}_{1,i+1-k},\ell''_{1,k-1};s).\nonumber\\ \label{e_r7}
\end{eqnarray}
}}

{\small{
\begin{eqnarray} 
\lefteqn{p_{j1}(\ell'_{1,j-1};s) \;\;=\;\; } \nonumber \\
&& \dfrac{1}{W(\ell-1)+w_1}\Big(\sum_{i=1}^{j-1}W(\ell'_i-1)
p_{j1}(\ell'_{1,i-1},\ell'_{i}-1,\ell'_{i+1,j-1};s) 
 \\ && \hskip1in + \;\; (j-1)w_{j,1}p_{j-1}(\ell'_{1,j-1};s)\nonumber\\
&& + \sum_{i=j-1}^d \dfrac{2 w_{j,i+2-j}}{i-1}\sum_{p=1}^{j-2}
	\sum_{n=p+1}^{j-1} 
	\sum_{|\tilde{\ell}_{1,i+1-j}|=\ell'_n-1}
	p_{i1}(\ell'_{1,n-1},\tilde{\ell}_{1,i+1-j},\ell'_{n+1,j-1};s),\nonumber\\
&& + \;\; (j-1)\sum_{i=j}^{d}\dfrac{(i-j+1) w_{j,i+2-j}}{i-1}
  \sum_{|\tilde{\ell}_{1,i+1-j}|=\ell'_1-1}
  p_{i1}(\tilde{\ell}_{1,i+1-j},\ell'_{2,j-1};s)\Big),\nonumber \\ 
	\label{e_r8}&&  \\
	\lefteqn{p_{j1}(\ell'_{1,j-1};\ell) \;\;=\;\; 0.} 
		\nonumber
\end {eqnarray}
}}
\begin{figure}[!h]
\centerline{\scalebox{0.7}{\includegraphics{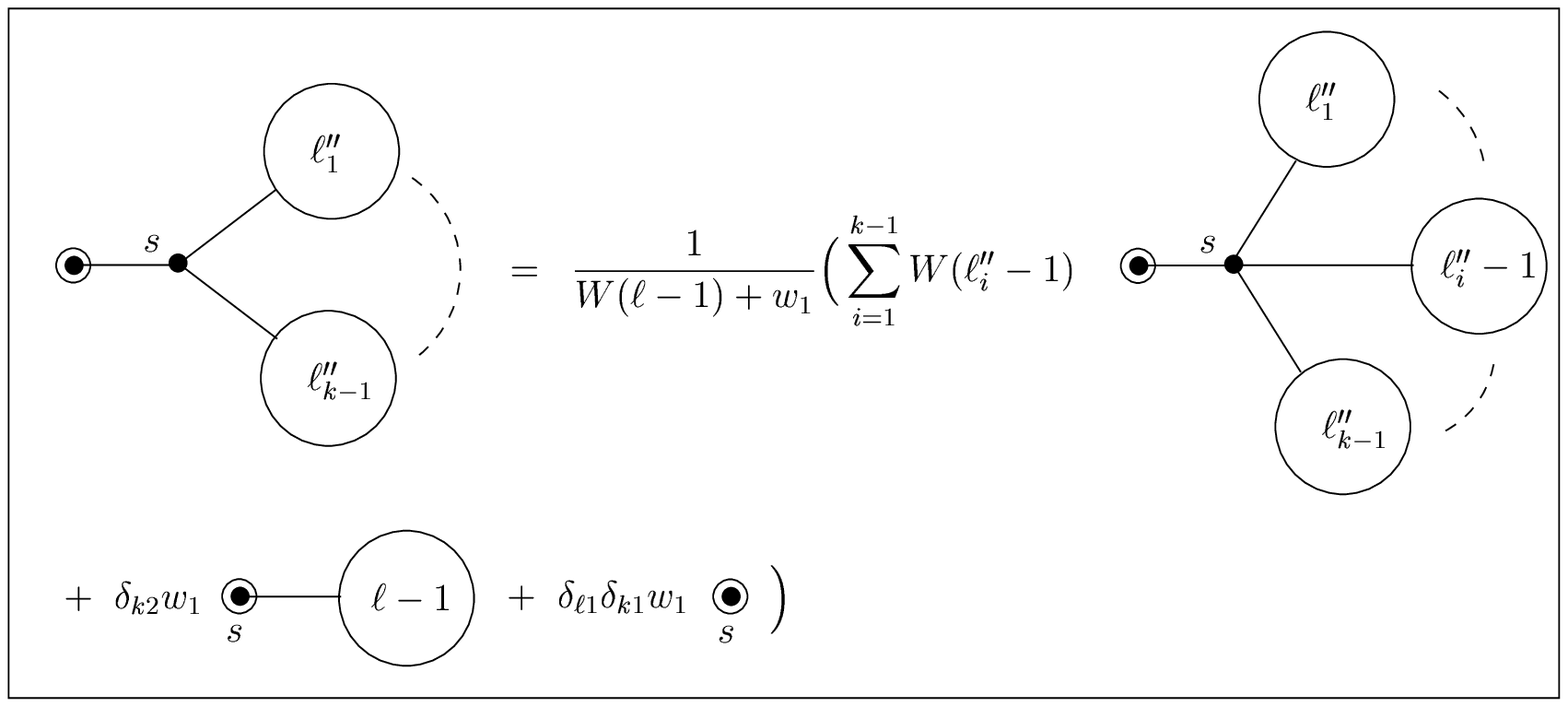}}} 
\caption{
Illustration of equation (\ref{e_r6}).} \label{f_r6}
\end{figure}
\begin{figure}[!h]
\centerline{\scalebox{0.7}{\includegraphics{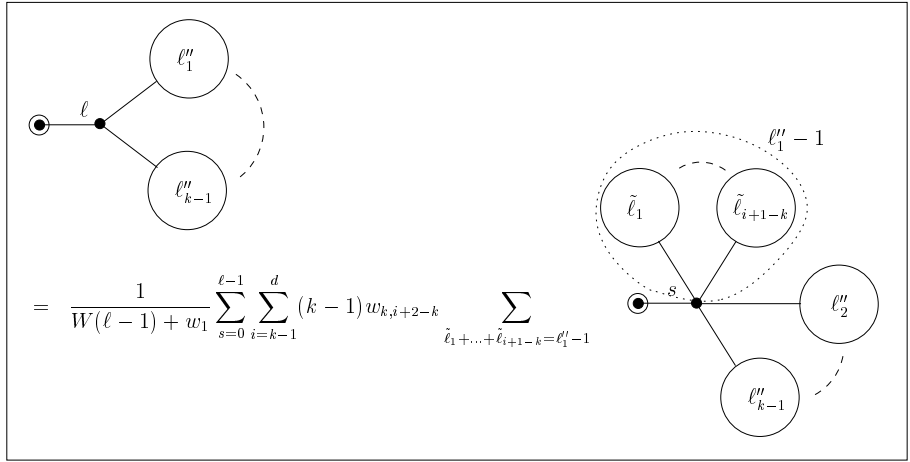}}} 
\caption{
Illustration of equation (\ref{e_r7}).} \label{f_r7}
\end{figure}

In deriving equations (\ref{e_r8}) and (\ref{e_r9}) and the corresponding 
figures, note that the index $p$ is introduced in the second last 
diagram in each figure. The reason for it is the following: 
even though the functions $p_{j1}(\ell'_1,\ldots,\ell'_{j-1};s)$ and 
$p_{jk}(\ell'_1,\ldots,\ell'_{j-1};\ell''_1,\ldots,\ell''_{k-1};s)$ 
are symmetric under permutations of $(\ell'_2,\ldots,\ell'_{j-1})$ it 
does matter where the edge going from $s$ towards the root, is located. 

Therefore, we group together the balloons counter-clockwise from $s$ towards 
the rooted balloon and the balloons clockwise from $s$ 
towards the rooted balloon into another group, one of the groups is
possibly empty. 
If the total number of balloons in the groups is $i-2$ then there are 
$i-1$ such possible configurations. 
In the equations we therefore divide by $i-1$ and sum over all the 
configurations which contribute to the configuration on the 
left of the equality sign. 

The index $p$ in the sum is the location of $s$ counter-clockwise from 
the rooted balloon. Note that $p$ can be no larger than $j-2$ since if 
it were larger, there would be no space for the balloons inside the 
dotted circle. 
Note that the balloons inside the dotted circle are always drawn 
clockwise from the vertex $s$. 
To count the possibility that they are counter-clockwise from $s$ we 
multiply by 2.
\begin{figure}[!h]
\centerline{\scalebox{0.7}{\includegraphics{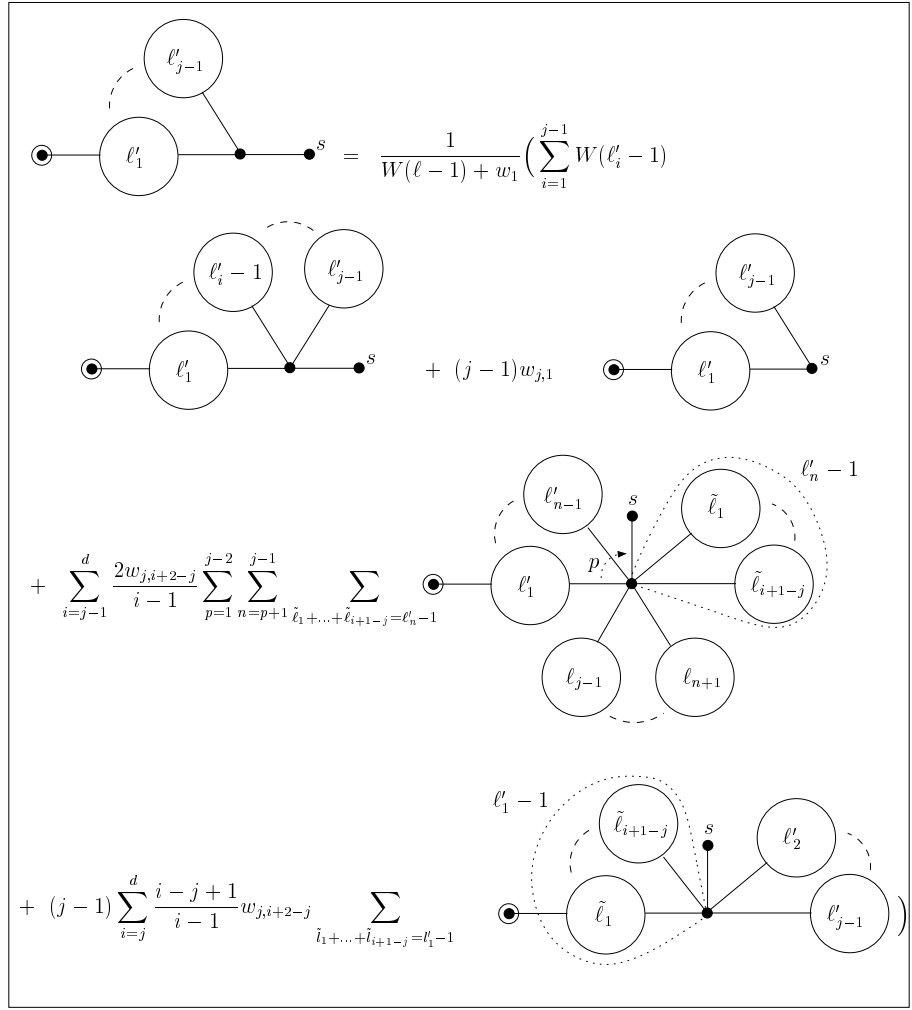}}} 
\caption{
Illustration of equation (\ref{e_r8}).} \label{f_r8}
\end{figure}
{\small{
\begin {eqnarray} \nonumber	
\lefteqn{p_{jk}(\ell'_{1,j-1};\ell''_{1,k-1};s) \;\;=} \nonumber \\
&& \frac{1}{W(\ell-1)+w_1} \Big(\sum_{i=1}^{j-1}W(\ell'_i-1)
	p_{jk}(\ell'_{1,i-1},\ell'_i-1,\ell'_{i+1,j-1};\ell''_{1,k-1};s) \nonumber \\ 
&&+\;\; \sum_{i=1}^{k-1}W(\ell''_i-1)p_{jk}(\ell'_{1,j-1};\ell''_{1,i-1},\ell''_i-1,\ell''_{i+1,k-1};s)  
\nonumber  \\ && + \;\; (j-1)w_{j,k}p_{j+k-2}(\ell'_{1,j-1},\ell''_{1,k-1};s)  \nonumber\\
&&+ \sum_{i=j-1}^{d}\frac{2 w_{j,i+2-j}}{i-1}\sum_{p=1}^{j-2}
	\sum_{n=p+1}^{j-1}\sum_{|\tilde{\ell}_{1,i+1-j}|=\ell'_n-1} p_{ik}(\ell'_{1,n-1},\tilde{\ell}_{1,i+1-j},\ell'_{n+1,j-1};\ell''_{1,k-1};s)  \nonumber\\
&&+ \;\;(j-1)\sum_{i=j}^{d}\frac{i-j+1}{i-1}w_{j,i+2-j}
	\sum_{|\tilde{\ell}_{1,i+1-j}|=\ell'_1-1} 
 p_{ik}(\tilde{\ell}_{1,i+1-j},\ell'_{2,j-1};\ell''_{1,k-1};s)\Big), \nonumber \\ \label{e_r9} 
\end{eqnarray}
}}

\begin{figure}[!h]
\centerline{\scalebox{0.7}{\includegraphics{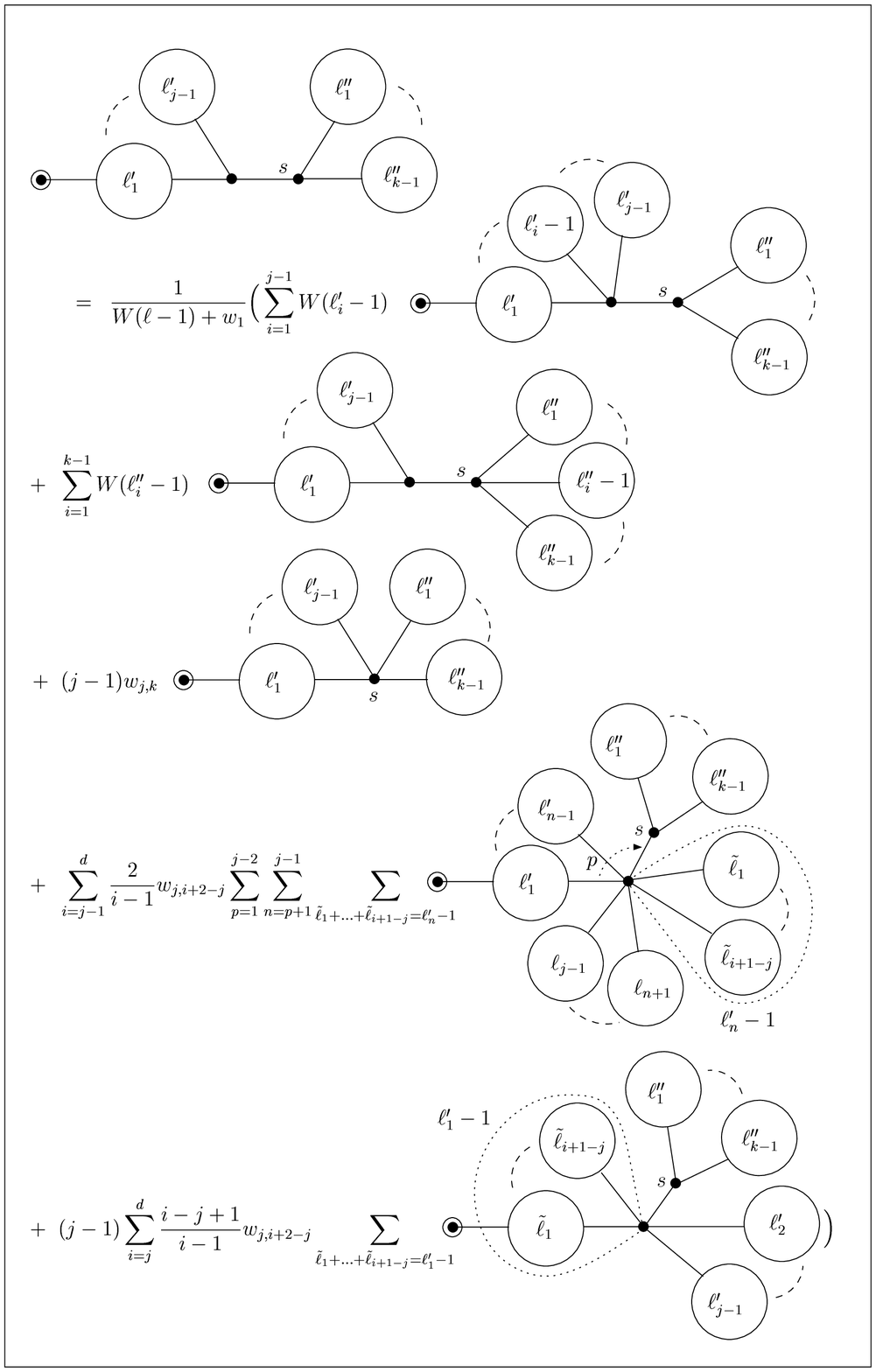}}} 
\caption{
Illustration of equation (\ref{e_r9}).} \label{f_r9}
\end{figure}

{\small{
\begin{eqnarray} 
\lefteqn{p_{jk}(\ell'_{1,j-1};\ell''_{1,k-1};\ell) \;\;=\;\; \dfrac{1}{W(\ell-1)+w_1}\times}\nonumber\\
&& \sum_{s=0}^{\ell-1}\sum_{n=1}^{k-1}\sum_{i=k-1}^{d}w_{k,i+2-k} 
\sum_{|\tilde{\ell}_{1,i+1-k}|=\ell''_n-1} 
p_{ji}(\ell'_{1,j-1};\ell''_{1,n-1},
	\tilde{\ell}_{1,i+1-k},\ell''_{n+1,k-1};s).\nonumber \\ \label{e_r10}
\end{eqnarray}
}}
\begin{figure}[!h]
\centerline{\scalebox{0.7}{\includegraphics{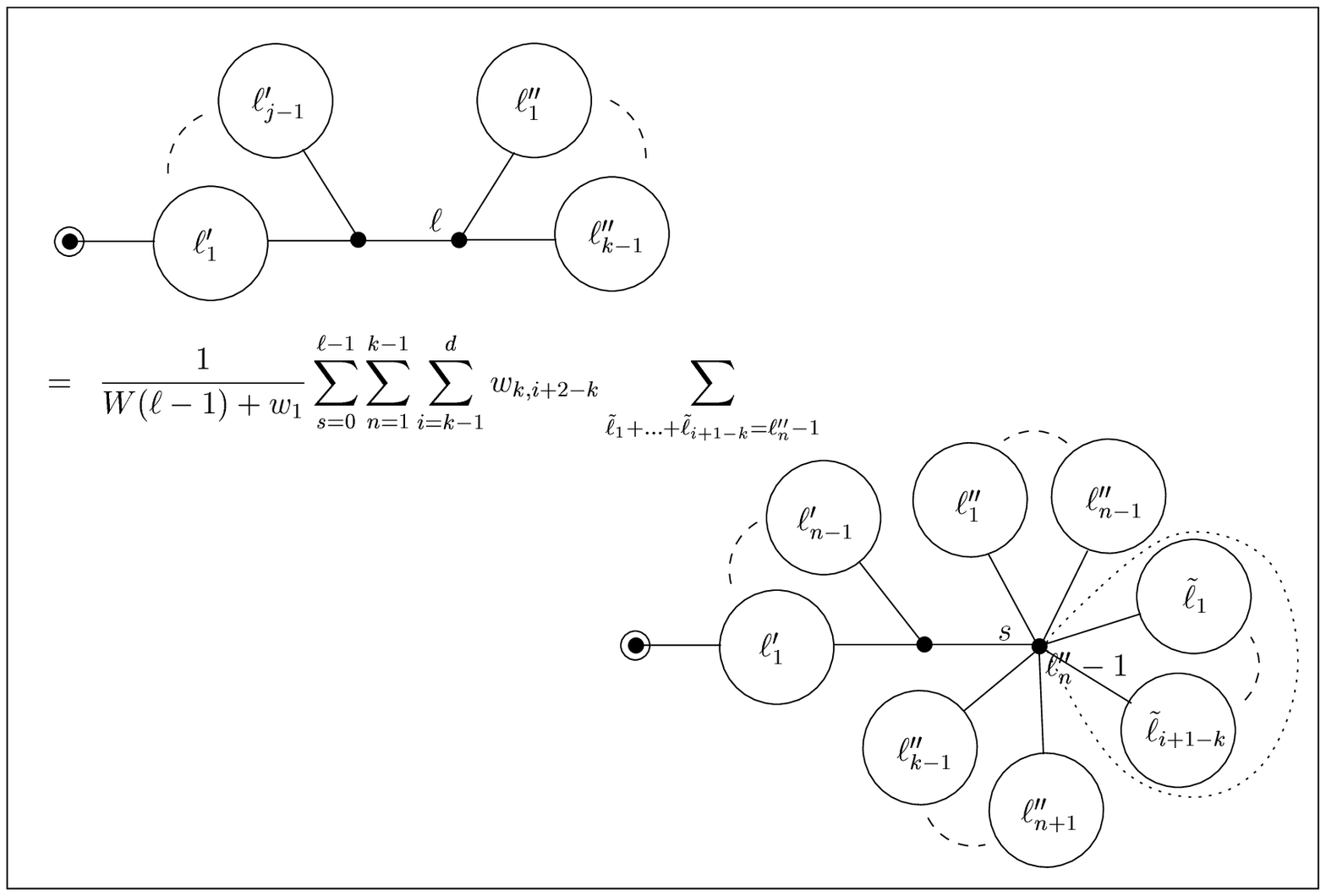}}} 
\caption{Illustration of equation (\ref{e_r10}).} 
\label{f_r10}
\end{figure}

Now average over the label $s$ as before and get the following 
recursion, going from time $\ell$ to $\ell+1$
{\small{
\begin {eqnarray*}
\lefteqn{p_{1k}(\ell''_{1,k-1}) \;\;=}\nonumber \\
&&  \dfrac{\ell+1}{\ell+2}\dfrac{1}{W(\ell)+w_1} \Big(\sum_{i=1}^{k-1} 
W(\ell''_i-1)p_{1k}(\ell''_{1,i-1},\ell''_i-1,\ell''_{i+1,k-1})  
+ \delta_{k2}w_1p_R(\ell) \nonumber \\ && + \;\;\delta_{\ell,0}\delta_{k1}w_1 
+\;\;  (k-1)\sum_{i=k-1}^dw_{k,i+2-k} 
\sum_{|\tilde{\ell}_{1,i+1-k}|=\ell''_1-1}
p_{1i}(\tilde{\ell}_{1,i+1-k},\ell''_{1,k-1}) \Big).
\end{eqnarray*}
}}
{\small{
\begin{eqnarray*}
\lefteqn{p_{j1}(\ell'_{1,j-1}) \;\;=} \\
&& \dfrac{\ell+1}{\ell+2}\dfrac{1}{W(\ell)+w_1} 
\Big(\sum_{i=1}^{j-1}W(\ell'_i-1)
p_{j1}(\ell'_{1,i-1},\ell'_{i}-1,\ell'_{i+1,j-1}) 
 \\ && \hskip1.1in+\;\; (j-1)w_{j,1}p_{j-1}(\ell'_{1,j-1})\\
&& + \sum_{i=j-1}^d \frac{2w_{j,i+2-j}}{i-1}\sum_{p=1}^{j-2}
\sum_{n=p+1}^{j-1} 
\sum_{|\tilde{\ell}_{1,i+1-j}|=\ell'_n-1}
p_{i1}(\ell'_{1,n-1},\tilde{\ell}_{1,i+1-j},\ell'_{n+1,j-1}) \\ 
&& +\;\;(j-1)\sum_{i=j}^{d}\frac{i-j+1}{i-1}w_{j,i+2-j} 
\sum_{|\tilde{\ell}_{1,i+1-j}|=\ell'_1-1}
p_{i1}(\tilde{\ell}_{1,i+1-j},\ell'_{2,j-1})\Big).
\end{eqnarray*}
}}

{\small 
\begin{eqnarray*}
\lefteqn{
 p_{jk}(\ell'_{1,j-1};\ell''_{1,k-1}) \;\;=
	} \\
&&  \dfrac{\ell+1}{\ell+2}\dfrac{1}{W(\ell)+w_1} 
\Big(\sum_{i=1}^{j-1}W(\ell'_i-1)
p_{jk}(\ell'_{1,i-1},\ell'_i-1,\ell'_{i+1,j-1};\ell''_{1,k-1}) \\
&& +\;\; \sum_{i=1}^{k-1}W(\ell''_i-1)
p_{jk}(\ell'_{1,j-1};\ell''_{1,i-1},\ell''_i-1,\ell''_{i+1,k-1}) 
 \nonumber \\ && +\;\; (j-1)w_{j,k}p_{j+k-2}(\ell'_{1,j-1},\ell''_{1,k-1}) \\
&&+ \sum_{i=j-1}^{d}\frac{2w_{j,i+2-j}}{i-1}\sum_{p=1}^{j-2}
\sum_{n=p+1}^{j-1}
\sum_{|\tilde{\ell}_{1,i+1-j}|=\ell'_n-1}
 p_{ik}(\ell'_{1,n-1},\tilde{\ell}_{1,i+1-j},
	\ell'_{n+1,j-1};\ell''_{1,k-1}) \\ 
&& +\;\; (j-1)\sum_{i=j}^{d}\frac{i-j+1}{i-1}w_{j,i+2-j}
\sum_{|\tilde{\ell}_{1,i+1-j}|=\ell'_1-1} 
p_{ik}(\tilde{\ell}_{1,i+1-j},\ell'_{2,j-1};\ell''_{1,k-1})\\
&& + \;\;\sum_{n=1}^{k-1}\sum_{i=k-1}^{d}w_{k,i+2-k} 
\sum_{|\tilde{\ell}_{1,i+1-k}| =\ell''_n-1} 
p_{ji}(\ell'_{1,j-1};\ell''_{1,n-1},
	\tilde{\ell}_{1,i+1-k},\ell''_{n+1,k-1})\Big).
\end{eqnarray*}}

\end{document}